\newcommand{\Real}{\ensuremath{\mathbb{R}}}
\newcommand{\Plane}{\ensuremath{\mathbb{R}^2}}
\newcommand{\Poly}{\ensuremath{\mathcal{P}}}
\newcommand{\SPM}{\ensuremath{\mathsf{SPM}}}
\newcommand{\Vis}{\ensuremath{\mathsf{VR}}}
\newcommand{\Bd}{\ensuremath{\mathcal{B}}}
\newcommand{\plf}{\ensuremath{\mathrm{len}}}
\newcommand{\start}{\ensuremath{\alpha}}
\newcommand{\final}{\ensuremath{\omega}}
\newcommand{\Vpair}{\ensuremath{\mathcal{V}}}
\newcommand{\bd}{\ensuremath{\partial}}
\newcommand{\intr}{\ensuremath{\mathrm{int}}}
\newcommand{\seg}{\overline}
\newcommand{\diam}{\mathrm{diam}}
\newcommand{\dist}{\mathrm{d}}
\newtheoremstyle{mytheorem}{3pt}{3pt}{\slshape}{}{\bfseries}{}{.5em}{}
\theoremstyle{mytheorem}
\newtheorem{lemma}{Lemma}
\newtheorem{theorem}{Theorem}
\newtheorem{corollary}{Corollary}
\newtheorem{claim}{Claim}
\theoremstyle{definition}
\newbox\ProofSym
\renewenvironment{proof}[1][Proof.]{\O@proof{#1}}{\O@endproof}
\def\O@proof#1{\trivlist
   \@topsep\z@\@topsepadd\smallskipamount%
   \@ifstar{\item[]}{\item[\hskip\labelsep\it #1 ]}}
\def\O@endproof{\hfill\copy\ProofSym\linebreak[3mm]\endtrivlist}
\def\denseitems{
    \itemsep1pt plus1pt minus1pt
    \parsep0pt plus0pt
    \parskip0pt\topsep0pt}
\title{The Geodesic Diameter of Polygonal Domains%
\thanks{%
A preliminary version of this paper was presented at the 18th Annual European Symposium on Algorithms (ESA 2010).
Work by S.W. Bae was supported by National Research Foundation of
Korea (NRF) grant
funded by the Korea government (MEST) (No. 2010-0005974).
Work by Y. Okamoto was supported by Global COE Program
``Computationism as a Foundation for the Sciences'' and Grant-in-Aid
for Scientific Research from Ministry of Education, Science and
Culture, Japan, and Japan Society for the Promotion of Science.
}
}
\author{%
Sang Won Bae\thanks{%
Department of Computer Science, Kyonggi University, Suwon, Korea.
Email: \texttt{swbae@kgu.ac.kr}
}
\and
Matias Korman\thanks{%
Computer Science Department, Universit\'e Libre de Bruxelles (ULB), Belgium.
Email: \texttt{mkormanc@ulb.ac.be}
}
\and %
Yoshio Okamoto\thanks{%
Center for Graduate Education Initiative,
Japan Advanced Institute of Science and Technology, Nomi, Japan.
Email: \texttt{okamotoy@jaist.ac.jp} } }
\date{%
\today\quad\currenttime
}
\begin{document}

\maketitle

\begin{abstract}
 This paper studies the geodesic diameter of polygonal domains
 having $h$ holes and $n$ corners.
 For simple polygons (i.e., $h=0$),
 the geodesic diameter is determined by a pair of corners of a given polygon
 and can be computed in linear time, as known by Hershberger and Suri.
 For general polygonal domains with $h \geq 1$, however,
 no algorithm for computing the geodesic diameter was known prior to this paper.
 In this paper, we present the first algorithms that compute the geodesic diameter
 of a given polygonal domain in worst-case time $O(n^{7.73})$ or $O(n^7 (\log n + h))$.
 The main difficulty unlike the simple polygon case relies on
 the following observation revealed in this paper:
 two interior points can determine the geodesic diameter and in that case
 there exist at least five distinct shortest paths between the two.
\end{abstract}

\section{Introduction} \label{sec:intro}
A \emph{polygonal domain} $\Poly$ with $h$ holes and $n$ corners is a connected and closed subset of $\Plane$ having $h$ holes whose boundary consists of $h+1$ simple closed polygonal chains of $n$ total line segments.
Given a polygonal domain $\Poly$, the geodesic distance $\dist(p,q)$ between two points $p$ and $q$ of $\Poly$ is defined as the length of a shortest path that connects $p$ and $q$
and stays within $\Poly$.

This paper addresses the geodesic diameter problem in polygonal domains having one or more holes.
The geodesic diameter $\diam(\Poly)$ of domain $\Poly$ is defined as the largest possible geodesic distance between any two points of $\Poly$, that is, $\diam(\Poly) = \max_{s,t \in \Poly} \dist(s,t)$.



For simple polygons (i.e., domains with no hole), the geodesic diameter has been extensively studied. Chazelle~\cite{c-tpca-82} provided the first $O(n^2)$-time algorithm computing the geodesic diameter of a simple polygon.
Afterwards, Suri~\cite{s-agfnpsp-87} presented an $O(n\log n)$-time algorithm that solves the all-geodesic-farthest neighbors problem, computing the farthest neighbor of every corner and thus finding the geodesic diameter. At last, Hershberger and Suri~\cite{hs-msspm-97} showed that the diameter can be computed in linear time using fast matrix search techniques.

On the other hand, the geodesic diameter of a domain having one or more holes is less understood. Mitchell~\cite{m-spn-04} has posed an open problem asking an algorithm for computing the geodesic diameter of a polygonal domain.
However, even for the corner-to-corner diameter $\max_{u,v\in V} \dist(u,v)$,
where $V$ denotes the set of corners of $\Poly$,
we know nothing better than a brute-force algorithm that takes $O(n^2\log n)$ time,
checking all the geodesic distances between every pair of corners.\footnote{%
Personal communication with Joseph S. B. Mitchell.}
Prior to our results,
there was no known algorithm for computing the geodesic diameter in domains with holes.
We should also mention that Koivisto and Polishchuk~\cite{kp-gdpd-10} had
claimed an improved algorithm after a preliminary report of our work~\cite{bko-gdpd-10},
but it was shown to be a failed trial through conversations with the authors.\footnote{%
Personal communication with Valentin Polishchuk.}

This fairly wide gap between simple polygons and polygonal
domains with holes is seemingly due to the uniqueness of the
shortest path between any two points.
When a domain $\Poly$ has no hole, it is well known that
there is a unique shortest path between any two points~\cite{gh-ospqsp-89}. 
Using this uniqueness, one can show that the diameter $\diam(\Poly)$
is realized by a pair of corners~\cite{hs-msspm-97,s-agfnpsp-87}.
For general polygonal domains, however, this is not the case.
In this paper, we exhibit several examples where the diameters are realized by
non-corner points on $\bd \Poly$ or even by interior points
of $\Poly$. See \figurename~\ref{fig:examples1}. 
Such examples were constructed based on the multiplicity
of shortest paths and, to our best knowledge, never known prior to this work.
This observation also shows an immediate difficulty in devising any exhaustive algorithm
since one sees no intuitive discretization of the search space.

The status of the geodesic center problem is also similar.
A point in $\Poly$ is defined as a \emph{geodesic center} if it minimizes the maximum geodesic distance from it to any other point of $\Poly$.
Asano and Toussaint~\cite{at-cgcsp-85} introduced
the first $O(n^4\log n)$-time algorithm for computing the geodesic center
of a simple polygon (i.e., when $h=0$), and
Pollack, Sharir and Rote~\cite{psr-cgcsp-89} improved it to $O(n\log n)$ time.
As with the diameter problem, there is no known algorithm for domains with holes.
See O'Rourke and Suri~\cite{os-p-04} and Mitchell~\cite{m-spn-04}
for more references on the geodesic diameter/center problem.

\begin{figure}[t]
\begin{center}
  \includegraphics[width=.98\textwidth]{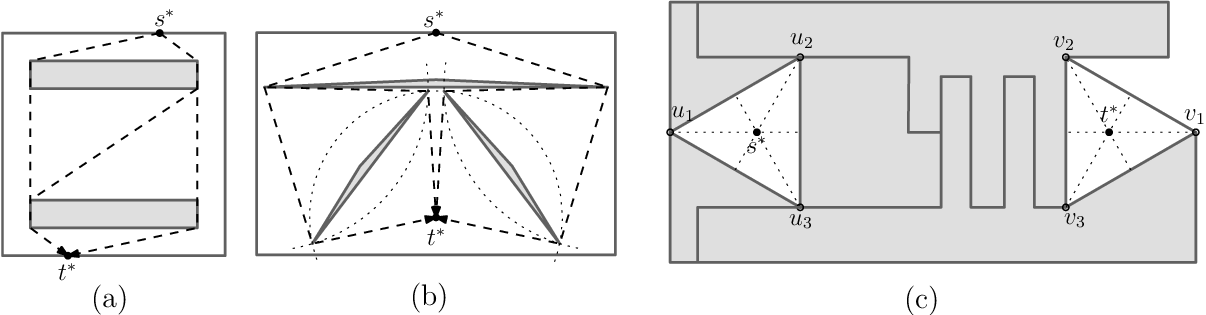}
\end{center}
\caption{\small Three polygonal domains where the geodesic diameter
 is determined by a pair $(s^*, t^*)$ of non-corner points;
 Gray-shaded regions depict the interior of the holes and dark gray segments depict the boundary $\bd\Poly$.
 Recall that $\Poly$, as a set, contains its boundary $\bd\Poly$.
 (a) Both $s^*$ and $t^*$ lie on $\bd \Poly$.
 There are three shortest paths between $s^*$ and $t^*$.
 In this domain, there are two (symmetric) diametral pairs (only one is depicted for clarity).
 (b) $s^* \in \bd \Poly$ and $t^* \in \intr\Poly$.
 Three triangular holes are placed in a symmetric way, obtaining four shortest paths between $s^*$ and $t^*$.
 (c) Both $s^*$ and $t^*$ lie in the interior $\intr\Poly$.
 Here, the five holes are packed like jigsaw puzzle pieces,
 forming narrow corridors (dark gray paths) and two empty, regular triangles.
 Observe that  $\dist(u_1,v_1)=\dist(u_1, v_2)= \dist(u_2, v_2)=\dist(u_2,v_3)$
 $=\dist(u_3,v_3)=\dist(u_3,v_1)$.
 The points $s^*$ and $t^*$ lie at the centers of the triangles formed
 by the $u_i$ and the $v_i$, respectively.
 There are six shortest paths between $s^*$ and $t^*$.
 }
\label{fig:examples1}
\end{figure}

Since the geodesic diameter/center of a simple polygon is determined by its corners,
one can exploit the \emph{geodesic farthest-site Voronoi diagram} of the set $V$ of corners
to compute the diameter/center,
which can be built in $O(n \log n)$ time~\cite{afw-fsgvd-93}.
Recently, Bae and Chwa~\cite{bc-gfvdpdh-09} presented an $O(nk \log^3 (n+k))$-time algorithm
for computing the geodesic farthest-site Voronoi diagram of $k$ sites in polygonal domains with holes.
This result can be used to compute the geodesic diameter $\max_{p,q\in S} \dist(p,q)$ of
a \emph{finite} set $S$ of points in $\Poly$. However, this approach cannot be directly used for computing $\diam(\Poly)$ without any characterization of the diameter. Moreover, when $S=V$, this approach is no better than the brute-force $O(n^2 \log n)$-time algorithm for computing the corner-to-corner diameter $\max_{u,v \in V}\dist(u,v)$.

In this paper, we present the first algorithms that compute
the geodesic diameter of a given polygonal domain in
$O(n^{7.73})$ or $O(n^7(\log n + h))$ time in the worst case.
Our new geometric results underlying the algorithms show that
the existence of any diametral pair consisting of non-corner points implies
multiple shortest paths between the pair;
among other results, we show that 
\emph{if $(s,t)$ is a diametral pair and both $s$ and $t$ lie in the interior of $\Poly$,
then there are at least five shortest paths between $s$ and $t$.}

Some analogies between polygonal domains and convex polytopes in $\Real^3$ can be seen.
O'Rourke and Schevon~\cite{os-cgd3p-89} proved 
that if the geodesic diameter on a convex $3$-polytope is realized by two non-corner points,
then at least five shortest paths exist between the two;
see also Zalgaller~\cite{z-ipt-07} for simpler arguments.
Based on this observation, they presented an $O(n^{14} \log n)$-time algorithm for computing
the geodesic diameter on a convex $3$-polytope.
Afterwards, the time bound was improved to $O(n^8\log n)$ by Agarwal et al.~\cite{aaos-supa-97}
and recently to $O(n^7 \log n)$ by Cook IV and Wenk~\cite{cw-sppps-09}.


The rest of the paper is organized as follows:
After introducing preliminary definitions and concepts in Section~\ref{sec:pre},
we investigate local maxima of the lower envelope of convex functions in Section~\ref{sec:convex}, resulting in Theorem~\ref{theorem:convex_max}.
Section~\ref{sec:charac} extensively exploits the intermediate result
to show lower bounds on the number of shortest paths
between a diametral pair for every possible case, and then
Section~\ref{sec:algorithm} describes our algorithms for the geodesic diameter.
We finally concludes the paper with summary, some remarks, and open issues
in Section~\ref{sec:conclusion}.
Also, we exhibit several interesting examples that cover all possible combinatorial cases
in Appendix~\ref{sec:moreexample}.
We hope that the readers will enjoy them.

\section{Preliminaries} \label{sec:pre}
Throughout the paper, we frequently use several topological concepts such as
open and closed subsets, neighborhoods, and the boundary $\bd A$ and the interior $\intr A$
of a set $A$;
unless stated otherwise, all of them are supposed to be derived with respect to
the standard topology on $\Real^d$ with the Euclidean norm $\|\cdot\|$ for fixed $d\geq 1$. We also denote the straight line segment joining two points $a, b$ by $\seg{ab}$.

A \emph{polygonal domain} $\Poly$ with $h$ holes and $n$ corners\footnote{%
We reserve the term ``vertex'' for 0-dimensional faces of subdivisions of a certain space.}
is a connected and closed subset of $\Plane$ with $h$ holes whose boundary $\bd \Poly$
consists of $h+1$ simple closed polygonal chains of $n$ total line segments.
The boundary $\bd \Poly$ of a polygonal domain $\Poly$ is regarded as a series of
\emph{obstacles} so that any feasible path in $\Poly$ is not allowed to cross
$\bd \Poly$.
The \emph{geodesic distance} $\dist(p,q)$ between any two points $p,q$ in a polygonal domain $\Poly$ is defined as the length of a shortest feasible path between $p$ and $q$, where the \emph{length} of a path is the sum of the Euclidean lengths of its segments.
It is well known from earlier work~\cite{m-spaop-96} that there always exists a shortest feasible path between any two points $p, q \in \Poly$,
and thus the geodesic distance function $\dist(\cdot, \cdot)$ is well defined.
The geodesic diameter $\diam(\Poly)$ of a polygonal domain $\Poly$ is defined as the largest geodesic distance between any two points of $\Poly$, that is,
 \[\diam(\Poly) = \max_{s,t \in \Poly} \dist(s,t).\]
A pair $(s,t)$ of points in $\Poly$ that realizes the geodesic diameter $\diam(\Poly)$
is called a \emph{diametral pair}.

\paragraph{Shortest path map.}
Let $V$ be the set of all corners of $\Poly$
and $\pi(s,t)$ be a shortest path between $s\in\Poly$ and $t\in\Poly$.
Such a path $\pi(s,t)$ is represented as a sequence
$\pi(s,t) = (s, v_1, \ldots, v_k, t)$ for some $v_1, \ldots, v_k\in V$;
that is, a polygonal chain through a sequence of corners~\cite{m-spaop-96}.
Note that we can have $k=0$ when $\dist(s,t)=\|s-t\|$.
If two paths (with possibly different endpoints) induce the same sequence of corners
$(v_1, \ldots, v_k)$,
then they are said to have the same \emph{combinatorial structure}.

The \emph{shortest path map} $\SPM(s)$ for a fixed $s\in\Poly$ is a
decomposition of $\Poly$ into cells such that every point in a common cell can be reached
from $s$ by shortest paths of the same combinatorial structure. Each
cell $\sigma_s(v)$ of $\SPM(s)$ is associated with a corner $v\in V$ which is the last corner of $\pi(s,t)$ for any $t$ in the cell $\sigma_s(v)$.
We also define the cell $\sigma_s(s)$ as the set of points $t\in\Poly$
such that $\pi(s,t)$ passes through no corner of $\Poly$, so $\pi(s,t) = \seg{st}$.
Each edge of $\SPM(s)$ is an arc on the boundary of two incident cells
$\sigma_s(v_1)$ and $\sigma_s(v_2)$ determined by two
corners $v_1, v_2 \in V\cup\{s\}$.
Similarly, each vertex of $\SPM(s)$ is determined by at least three distinct corners
$v_1,v_2,v_3 \in V \cup \{s\}$.

Note that, for fixed $s\in\Poly$,
a point farthest apart from $s$ lies
at either (1) a vertex of $\SPM(s)$, (2) an intersection between the boundary $\bd\Poly$
and an edge of $\SPM(s)$, or (3) a corner in $V$.
The shortest path map $\SPM(s)$ 
has $O(n)$ total number of cells, edges, and vertices
and
can be computed in $O(n\log n)$ time using $O(n\log n)$ working space~\cite{hs-oaespp-99}.
For more details on shortest path maps, see~\cite{m-spaop-96, hs-oaespp-99, m-spn-04}.

\paragraph{Path-length function.}
If $\pi(s,t) \neq \seg{st}$, then there are two corners $u,v\in V$ such that
$u$ and $v$ are the first and last corners along $\pi(s,t)$ from $s$ to $t$, respectively.
Here, the path $\pi(s,t)$ is formed as the union of
$\seg{su}$, $\seg{vt}$ and a shortest path $\pi(u,v)$ from $u$ to $v$.
Note that $u$ and $v$ are not necessarily distinct.
In order to realize such a path, we assert that $s$ is visible from $u$ and
$t$ is visible from $v$. That is, $s\in \Vis(u)$ and $t\in \Vis(v)$,
where $\Vis(p)$ for any $p\in\Poly$ is defined to be the set of all points $q\in\Poly$
such that $\seg{pq}\subset \Poly$, also called the \emph{visibility region} of $p\in\Poly$.

We now define the \emph{path-length function} $\plf_{u,v}\colon \Vis(u)\times \Vis(v) \to \Real$ for any fixed pair of corners $u,v\in V$ to be
 \[\plf_{u,v}(s,t) := \|s-u\| + \dist(u,v) + \|v-t\|.\]
That is, $\plf_{u,v}(s,t)$ represents the length of paths from $s$ to $t$ that have a common combinatorial structure; going straight from $s$ to $u$, following a shortest path from $u$ to $v$,
and going straight to $t$.
Also, unless $\dist(s,t) = \|s-t\|$ (equivalently, $s\in \Vis(t)$), the geodesic distance $\dist(s,t)$ can be expressed as the pointwise minimum of some path-length functions:

\[ \dist(s,t) = \min_{u \in \Vis(s),~ v\in \Vis(t)} \plf_{u,v}(s,t).\]

Consequently, we have two possibilities for a diametral pair $(s^*,t^*)$; either we have $\dist(s^*,t^*) = \|s^*-t^*\|$ or the pair $(s^*,t^*)$ is a local maximum of the lower envelope of several path-length functions.
In the following, we will mainly study the latter case, since the former can be easily handled.

\section{Local Maxima of the Lower Envelope of Convex Functions} \label{sec:convex}
In this section, we give an interesting  property of the lower envelope of a family of convex functions which will afterwards be used in our geodesic diameter environment. We start with a basic observation on the intersection of hemispheres on a unit hypersphere in the $d$-dimensional space $\Real^d$. For any fixed positive integer $d$, let ${S}^{d-1}:=\{x \in \Real^d \mid \|x\| = 1\}$ be the unit hypersphere in $\Real^d$ centered at the origin. A \emph{closed} (or \emph{open}) \emph{hemisphere} on $S^{d-1}$ is defined to be the intersection of $S^{d-1}$ and a closed (open, respectively) half-space of $\Real^d$ bounded by a hyperplane that contains the origin.

We call a $k$-dimensional affine subspace of $\Real^d$ a \emph{$k$-flat}. Note that a hyperplane in $\Real^d$ is a $(d-1)$-flat and a line in $\Real^d$ is a $1$-flat. Also, the intersection of $S^{d-1}$ and a $k$-flat through the origin in $\Real^d$ is called a \emph{great $(k-1)$-sphere} on $S^{d-1}$. Note that a great $1$-sphere is called a great circle and a great $0$-sphere consists of two antipodal points.

\begin{lemma} \label{lemma:hemisphere}
 For any two positive integers $d$ and $m\leq d$, a set of any $m$ closed hemispheres on $S^{d-1}$ has a nonempty common intersection.
 Moreover, if the intersection has an empty interior relative to $S^{d-1}$,
 then it includes a great $(d-m)$-sphere on $S^{d-1}$.
\end{lemma}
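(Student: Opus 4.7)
My plan is to reduce the lemma to linear algebra. Writing each closed hemisphere as $H_i \cap S^{d-1}$ with $H_i = \{x \in \Real^d : \langle n_i, x\rangle \geq 0\}$ for a unit normal $n_i$, the common intersection is $C \cap S^{d-1}$, where $C := \bigcap_{i=1}^{m} H_i$ is a convex polyhedral cone. I would split the analysis according to whether the normals $n_1,\ldots,n_m$ are linearly independent or not; in the first case I will get nonempty relative interior, and in the second I will directly exhibit the desired great sphere.

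In the linearly independent case, since $m \leq d$, the linear map $T\colon \Real^d \to \Real^m$ defined by $T(x) = (\langle n_i, x\rangle)_{i=1}^m$ is surjective, so some $x_0 \in \Real^d$ satisfies $\langle n_i, x_0\rangle = 1$ for every $i$. After normalizing, $x_0/\|x_0\| \in S^{d-1}$ lies strictly inside every hemisphere, and by continuity a whole relative neighborhood of this point lies in $C \cap S^{d-1}$. Hence $C \cap S^{d-1}$ has nonempty interior relative to $S^{d-1}$, taking care of nonemptiness and making the hypothesis of the second claim vacuous in this case.

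In the linearly dependent case, I would consider the lineality space $L := \mathrm{span}(n_1, \ldots, n_m)^{\perp}$, consisting of all $x \in \Real^d$ for which every $\langle n_i, x\rangle$ vanishes. Since the normals span a subspace of dimension at most $m-1$, we get $\dim L \geq d - m + 1$. Every $x \in L$ satisfies $\langle n_i, x\rangle = 0 \geq 0$, so $L \subseteq C$ and $L \cap S^{d-1} \subseteq C \cap S^{d-1}$. Choosing any $(d-m+1)$-dimensional linear subspace $L' \subseteq L$, its intersection with $S^{d-1}$ is by the paper's definition a great $(d-m)$-sphere lying inside $C \cap S^{d-1}$. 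Combining the two cases, $C \cap S^{d-1}$ is always nonempty, and whenever it has empty relative interior we must be in the linearly dependent case, which furnishes the great $(d-m)$-sphere.

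I do not anticipate a serious obstacle: the argument is elementary linear algebra. The one point to handle with care is the dimension bookkeeping hidden in the paper's conventions, namely that a great $(d-m)$-sphere requires a $(d-m+1)$-dimensional subspace through the origin, which matches exactly the lower bound on $\dim L$ delivered by the linear dependence of the normals $n_i$.
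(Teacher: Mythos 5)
Your proof is correct, but it takes a genuinely different route from the paper's. The paper argues by a minimality/induction scheme: it considers the smallest $k$ for which the intersection of the first $k$ \emph{open} hemispheres is empty, extracts a great $(d-k)$-sphere $G$ contained in the intersection of the first $k-1$ bounding hyperplanes, then uses the antipodal symmetry of $G$ together with the nonemptiness of the previous open intersection to force $G \subseteq h_k$, and iterates to conclude that all $m$ hyperplanes contain a common $(d-m+1)$-flat through the origin. Your argument instead sidesteps any induction by splitting directly on the linear independence of the normals $n_1,\dots,n_m$: independence forces a strictly interior point (so the second claim is vacuous), and dependence yields $\dim \mathrm{span}(n_i)^\perp \ge d-m+1$, which immediately hands you the required great sphere inside the \emph{boundaries} of all the hemispheres. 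Both are elementary, but your dichotomy is shorter and makes the dimension count transparent; one thing it proves slightly more than asked is that linear dependence alone (not merely empty relative interior) already gives the great $(d-m)$-sphere, since in the dependent case you never actually invoke the empty-interior hypothesis.
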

\begin{proof}
We only give a proof for the second statement, which implies the first.
The case of $d=1$ is trivial, so we assume $d>1$.
Let $H_1,\ldots, H_m$ be any $m$ closed hemispheres on $S^{d-1}$, and
$h_i$ be the hyperplane through the origin in $\Real^d$ such that
$H_i$ lies in a closed half-space supported by $h_i$.
In this proof, we denote by $\wideparen{H}_i$ the open hemisphere,
defined to be $\wideparen{H}_i = H_i \setminus h_i$.
Also, let $\mathcal{H}_j := \bigcap_{1\leq i \leq j} H_i$ and
$\wideparen{\mathcal{H}}_j := \bigcap_{1\leq i \leq j} \wideparen{H}_i$.

Suppose that $\wideparen{\mathcal{H}}_m = \emptyset$.
Let $k$ be  the smallest integer such that $\wideparen{\mathcal{H}}_k = \emptyset$.
By definition, $k \geq 2$ and $\wideparen{\mathcal{H}}_{k-1} \neq \emptyset$.
Note that the intersection of any $k-1$ non-parallel hyperplanes of $\Real^d$
includes a $(d-k+1)$-flat 
and each $h_i$ contains the origin.
Hence, $\bigcap_{1\leq i\leq k-1} h_i$ includes a $(d-k+1)$-flat through the origin and thus
$\mathcal{H}_{k-1}$ includes a great $(d-k)$-sphere $G$ on $S^{d-1}$.
Since $x\in G$ implies $-x \in G$ for any $x\in S^{d-1}$,
we must have $G \subseteq h_k$, in order to have an empty intersection $\wideparen{\mathcal{H}}_k$.
This implies that $\bigcap_{1\leq i\leq k} h_i$ also includes a $(d-k+1)$-flat through the origin,
and further that $\bigcap_{1\leq i\leq m} h_i$ includes a $(d-m+1)$-flat through the origin.
We hence conclude that $\mathcal{H}_m = \bigcap_{1\leq i \leq m} H_i$ includes a great $(d-m)$-sphere
on $S^{d-1}$.
\end{proof}

Using Lemma~\ref{lemma:hemisphere} we prove the following theorem.

\begin{theorem} \label{theorem:convex_max}
 For any fixed positive integer $d$,
 let $\mathcal{F}$ be a finite family of real-valued convex functions
 defined on a convex subset $C\subseteq \Real^d$ and
 $g(x):= \min_{f \in\mathcal{F}} f(x)$ be their pointwise minimum.
 Suppose that $g$ attains a local maximum at $x^* \in C$ and
 there are exactly $m \leq d$ functions $f_1,\ldots,f_m \in\mathcal{F}$
 such that $f_i(x^*)=g(x^*)$ for each $i=1,\ldots,m$.
 Then, there exists a $(d+1-m)$-flat $\varphi \subset \Real^d$ through $x^*$
 such that $g$ is constant on $\varphi \cap U$
 for some neighborhood $U \subset \Real^d$ of $x^*$ with $U\subset C$.
\end{theorem}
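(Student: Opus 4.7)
The plan is to reduce the theorem to a statement about closed hemispheres on $S^{d-1}$ and then apply Lemma~\ref{lemma:hemisphere}. First I would localize: since every other $f \in \F \setminus \{f_1, \ldots, f_m\}$ has $f(x^*) > g(x^*)$, continuity gives a neighborhood $U \subset C$ of $x^*$ on which $g = \min_{i=1}^{m} f_i$. For each active index $i$, convexity together with the hypothesis that $f_i$ has no local minimum at $x^*$ yields $0 \notin \partial f_i(x^*)$, so I fix some nonzero subgradient $\xi_i \in \partial f_i(x^*)$.

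The central step, which I expect to be the main obstacle, is to establish the following claim: \emph{for every direction $v \in S^{d-1}$, some index $i \in \{1,\ldots,m\}$ satisfies $\langle \xi_i, v\rangle \leq 0$.} I would prove it by fixing $v$ and invoking the local-max condition: for any sufficiently small $t > 0$, $g(x^* + tv) \leq g(x^*)$, and since $g = \min_i f_i$ near $x^*$ with $f_i(x^*) = g(x^*)$ for every active $i$, at least one $j$ must satisfy $f_j(x^* + tv) \leq f_j(x^*)$. Monotonicity of the difference quotient of the convex univariate function $t \mapsto f_j(x^* + tv)$ then forces the directional derivative $D_v f_j(x^*) \leq 0$, and since $D_v f_j(x^*) = \max_{\xi \in \partial f_j(x^*)} \langle \xi, v\rangle$, the chosen $\xi_j$ satisfies $\langle \xi_j, v\rangle \leq 0$.

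Next I would invoke Lemma~\ref{lemma:hemisphere} on the $m$ closed hemispheres $\bar H_i := \{v \in S^{d-1} : \langle \xi_i, v\rangle \geq 0\}$: because $m \leq d$, their intersection is nonempty. The claim above rules out any $v_0$ with $\langle \xi_i, v_0\rangle > 0$ for every $i$, and a routine perturbation inside the tangent space to $S^{d-1}$ upgrades this to the statement that $\bigcap_i \bar H_i$ has empty relative interior in $S^{d-1}$. Lemma~\ref{lemma:hemisphere} then delivers a great $(d-m)$-sphere $G = S^{d-1} \cap F$ contained in the intersection, where $F \subset \Real^d$ is a $(d+1-m)$-dimensional linear subspace through the origin. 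Since $G$ is antipodally symmetric, the inequalities $\langle \xi_i, v\rangle \geq 0$ and $\langle \xi_i, -v\rangle \geq 0$ for $v \in G$ collapse to $F \perp \xi_i$ for each $i$.

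Finally I would take $\varphi := x^* + F$, a $(d+1-m)$-flat through $x^*$, and close the argument with the subgradient inequality: for $v \in F$ and any $t$ with $x^* + tv \in U$,
\[
f_i(x^* + tv) \;\geq\; f_i(x^*) + t\langle \xi_i, v\rangle \;=\; f_i(x^*) \;=\; g(x^*),
\]
so $g(x^* + tv) = \min_i f_i(x^* + tv) \geq g(x^*)$ on $\varphi \cap U$; combined with the local-max bound $g \leq g(x^*)$, this gives $g \equiv g(x^*)$ on $\varphi \cap U$, as required. The hardest part is the translation in the second paragraph: turning an analytic local-max statement about the pointwise minimum $g$ into a covering of the sphere by $m$ linear hemispheres so that Lemma~\ref{lemma:hemisphere} becomes applicable; once this bridge is built, the remaining convex-analytic and geometric content is routine.
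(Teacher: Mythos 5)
Your proposal is correct and follows essentially the same route as the paper's proof: both construct $m$ closed hemispheres on $S^{d-1}$ from the active functions, show the intersection has empty relative interior via the local-max property, invoke Lemma~\ref{lemma:hemisphere} to extract a great $(d-m)$-sphere and its spanning subspace $F$, and conclude that $g$ is locally constant on $x^*+F$. The only cosmetic difference is that you phrase the hemispheres via subgradients $\xi_i\in\partial f_i(x^*)$ while the paper uses supporting hyperplanes to the sublevel sets $L_i=\{x: f_i(x)\le f_i(x^*)\}$; these are the same normal vectors.
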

\begin{proof}
First, we give a sketchy idea of our proof for the theorem. All functions $f \in \mathcal{F}$ other than $f_1,\ldots, f_m$ must satisfy $f(x)>g(x)$ in a small neighborhood of $x^*$.
In particular, the function $g$ is the lower envelope of the $m$ convex functions $f_i$
in a small neighborhood of $x^*$.
By convexity, we will show that for each $i$, there is a hemisphere $H_i$ of directions in $S^{d-1}$ in which $f_i$ does not decrease.
(Note that the sphere $S^{d-1}$ represents the space of all directions in $\Real^d$.)
This result combined with Lemma \ref{lemma:hemisphere} gives that the intersection of hemispheres will be a $(d+1-m)$-flat in which neither of the $m$ functions (nor $g$) can decrease.
Since $x^*$ is a local maximum of $g$, the only possibility is
that $g$ remains constant near $x^*$ along the flat.

A more detailed proof is given as follows.
Let $x^* \in C$ and $f_1, \ldots, f_m \in \mathcal{F}$ be as in the statement.
For each $i$, consider the sublevel set $L_i := \{ x \in C \mid f_i(x) \leq f_i(x^*) \}$.
Here, we consider two cases:
(i) $x^*$ lies in the interior of $L_i$ or (ii) on its boundary $\bd L_i$.
Note that $L_i \subseteq C$ is convex since $f_i$ is a convex function.
For the latter case (ii), there exists a supporting hyperplane $h_i$ to $L_i$ at $x^*$
since $L_i$ is convex and $x^* \in \bd L_i$.
Denote by $h_i^\oplus$ the closed half-space that is bounded by $h_i$ and does not contain $L_i$.
For the former case (i), we choose $h_i$ to be any hyperplane of $\Real^d$ through $x^*$
and $h_i^\oplus$ to be any closed half-space supported by $h_i$.
Then, we have that $f_i(x^*) \leq f_i(x)$ for any $x\in h_i^\oplus \cap C$,
regardless of the cases;
in particular for Case (i), observe that $f_i(x) = f_i(x^*)$ for any $x\in L_i$ by convexity
so that we can choose any hyperplane as $h_i$.


%


Now, we let
 \[H_i := \{x-x^* \mid x\in h_i^\oplus, \|x-x^*\|=1\}\]
be a closed hemisphere of the unit sphere $S^{d-1}$ centered at the origin.
Note 
that $f_i$ does not decrease if we move from $x^*$ locally in any direction in $H_i$.
Since $g(x^{*})=f_i(x^{*})$ for any $i\in \{1,\ldots,m\}$ and $x^*$ is a local maximum of $g$,
the intersection $\bigcap_{i=1}^{m} H_i$ has an empty interior relative to $S^{d-1}$;
otherwise, there exists $y\in \intr \bigcap_{i=1}^{m} H_i$ such that $f_i(x^* + \lambda y) \geq f_i(x^*)$
for any $i\in \{1,\ldots, m\}$ and any $\lambda >0$ with $x^*+ \lambda y \in C$.

Hence, by Lemma~\ref{lemma:hemisphere}, $\bigcap_{i=1}^{m} H_i$
has a nonempty intersection including a great $(d-m)$-sphere $G$ on $S^{d-1}$.
Let $\varphi$ be the corresponding $(d-m+1)$-flat in $\Real^d$ through $x^*$
defined as
 \[\varphi := \{ x^* + \lambda y \in\Real^d \mid y\in G \text{ and } \lambda \in\Real\}.\]

Consider the restriction $f_i|_{\varphi\cap C}$ of $f_i$ on $\varphi \cap C$.
Since $f_i$ is convex and $\varphi$ is an affine subspace (thus, convex),
$f_i|_{\varphi\cap C}$ is also convex and their pointwise minimum $g|_{\varphi\cap C}$
attains a local maximum at $x^* \in \varphi \cap C$.
On the other hand, each $f_i|_{\varphi\cap C}$ attains a local minimum at $x^*$;
since $\varphi \subseteq h_i^\oplus$, we have $f_i(x^*) \leq f_i(x)$
for any point $x\in \varphi \cap C$.
Hence, $g|_{\varphi\cap C}$ also attains a local minimum at $x^*$
since $g(x^*) = f_i(x^*)$ for any $i \in \{1,\ldots,m\}$.
Consequently, $g$ is locally constant at $x^*$ on $\varphi$;
more precisely, there is a sufficiently small neighborhood $U \subset \Real^d$ of $x^*$
with $U \subset C$ such that $g$ is constant on $U \cap \varphi$,
completing the proof.
\end{proof}


\section{Properties of Geodesic-Maximal Pairs} \label{sec:charac}
We call a pair $(s^*,t^*) \in \Poly \times \Poly$ \emph{maximal} if $(s^*,t^*)$
is a local maximum of the geodesic distance function $\dist$.
That is, $(s^*,t^*)$ is maximal if and only if there are two neighborhoods
$U_s, U_t \subset \Real^2$ of $s^*$ and of $t^*$, respectively,
such that for any $s \in U_s\cap \Poly$ and any $t\in U_t \cap \Poly$
we have $\dist(s^*, t^*) \geq \dist(s,t)$.
Clearly, any diametral pair is maximal.

Consider any maximal pair $(s^*, t^*)$ in $\Poly$.
Let $\Pi(s^*,t^*)$ be the set of all shortest paths from $s^*$ to $t^*$.
Then, each path $\pi \in \Pi(s^*, t^*)$ is associated with
a pair of corners $(u,v)$ that are its first and last corners as discussed in Section~\ref{sec:pre}.
Note that such a pair $(u,v)$ of corners always exists for any $\pi \in \Pi(s^*, t^*)$;
even if $\dist(s^*, t^*) = \|s^* - t^*\|$, then both endpoints $s^*$ and $t^*$ must be
corners in $V$ by its maximality.
We now focus on the set of such pairs of the first and last corners, defined to be
 \[ \Vpair(s^*,t^*) := \{ (u,v) \mid \text{$\exists\pi\in\Pi(s^*, t^*)$ s.t. $u, v\in V$ are the first and last corners along $\pi$, resp.} \}.\]
Giving an arbitrary ordering, we set $\Vpair(s^*,t^*) = \{(u_1, v_1), \ldots, (u_m, v_m)\}$,
where $m$ is the cardinality of $\Vpair(s^*, t^*)$.
Also, we let
 \[ V_{s^*} := \{ u_1, \ldots, u_m\}, \quad V_{t^*} := \{v_1,\ldots, v_m\}.\]
Some immediate bounds are
$|\Pi(s^*,t^*)| \geq m$, $|V_{s^*}| \leq m$, and $|V_{t^*}| \leq m$.
Observe that it is not true that we always have the equality $|\Pi(s^*,t^*)| = m$;
in some cases, there can be multiple shortest paths between a pair of corners.
In the following, we show the tight bound on the cardinality $m$ of the set $\Vpair(s^*, t^*)$, provided that $(s^*, t^*)$ is maximal.

Let $E$ be the set of all sides of $\Poly$ without their endpoints and $\Bd$ be their union.
Note that $\Bd = \bd \Poly \setminus V$ is the boundary of $\Poly$ except the corners $V$.
The goal of this section is to prove the following theorem, which is the main combinatorial
result of this paper.
\begin{theorem} \label{theorem:charact}
 Suppose that $(s^*,t^*)$ is a maximal pair in $\Poly$, and that
 $\Vpair(s^*,t^*)$, $V_{s^*}$, and $V_{t^*}$ are defined as above.
 Then, we have the following implications.
 \begin{alignat*}{7}
  &\text{\bf (V-V)} \qquad & &s^*\in V \text{,~ } & & t^*\in V & \quad \text{implies}\quad & |\Vpair(s^*,t^*)|\geq 1, |V_{s^*}|\geq 1, |V_{t^*}|\geq 1;\\
  &\text{\bf (V-B)} \qquad & &s^*\in V \text{,~ } & & t^*\in \Bd & \quad \text{implies}\quad &|\Vpair(s^*,t^*)|\geq 2, |V_{s^*}|\geq 1, |V_{t^*}|\geq 2;\\
  &\text{\bf (V-I)} \qquad & &s^*\in V \text{,~ } & & t^*\in \intr \Poly & \quad \text{implies}\quad &|\Vpair(s^*,t^*)|\geq 3, |V_{s^*}|\geq 1, |V_{t^*}|\geq 3;\\
  &\text{\bf (B-B)} \qquad & &s^*\in \Bd \text{,~ } & & t^*\in \Bd & \quad \text{implies}\quad &|\Vpair(s^*,t^*)|\geq 3, |V_{s^*}|\geq 2, |V_{t^*}|\geq 2;\\
  &\text{\bf (B-I)} \qquad & &s^*\in \Bd \text{,~ } & & t^*\in \intr \Poly & \quad \text{implies}\quad &|\Vpair(s^*,t^*)|\geq 4, |V_{s^*}|\geq 2, |V_{t^*}|\geq 3;\\
  &\text{\bf (I-I)} \qquad & &s^*\in \intr \Poly \text{,~ } & & t^*\in \intr \Poly & \quad \text{implies}\quad &|\Vpair(s^*,t^*)|\geq 5, |V_{s^*}|\geq 3, |V_{t^*}|\geq 3.
 \end{alignat*}
 Moreover, each of the above bounds is tight.
\end{theorem}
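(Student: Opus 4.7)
The plan is to reduce Theorem~\ref{theorem:charact} to Theorem~\ref{theorem:convex_max} applied (possibly twice) to $g(s,t):=\dist(s,t)$, which I locally express as the pointwise minimum $\min_i \plf_{u_i,v_i}(s,t)$ over the finitely many \emph{active} corner-pairs (those with $\plf_{u_i,v_i}(s^*,t^*)=\dist(s^*,t^*)$). By continuity the inactive path-length functions stay strictly above $\dist(s^*,t^*)$ nearby, so locally $g=\min_{i=1,\ldots,m}\plf_{u_i,v_i}$ with $m:=|\Pi(s^*,t^*)|$. Each $\plf_{u,v}$ is convex and has nonzero gradient at $(s^*,t^*)$ whenever $s^*\neq u$ and $t^*\neq v$, so it does not attain a local minimum there. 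I write $d:=d_s+d_t$, where $d_s,d_t\in\{0,1,2\}$ count the tangential degrees of freedom at $s^*$ and $t^*$ (respectively $0$, $1$, $2$ when the point is a corner, in the relative interior of a side, or in the interior of $\Poly$).

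The geometric core of the argument is a \emph{constancy lemma}: for $(s^*,t^*)\neq(u,v)$, the path-length function $\plf_{u,v}$ is constant on at most a $1$-dimensional affine subspace through $(s^*,t^*)$, and, when so, its tangent is $\pm(\vec n^s,-\vec n^t)$ with $\vec n^s=(s^*-u)/\|s^*-u\|$ and $\vec n^t=(t^*-v)/\|t^*-v\|$. This is a direct calculation using that every Euclidean norm $\|\cdot-u\|$ is strictly convex on every line not parallel to the radial direction. Applying Theorem~\ref{theorem:convex_max} with this effective dimension $d$, if $m\leq d$ then there is a $(d+1-m)$-flat $\varphi$ through $(s^*,t^*)$ on which $g$ is locally constant. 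For each $x\in\varphi$ near $(s^*,t^*)$ some active $\plf_{u_i,v_i}$ attains $g(x)=g(s^*,t^*)$, and, being convex with the same value at two points of $\varphi$, is constant on the whole segment between them; by the constancy lemma this segment lies in a $1$-flat. Since finitely many $1$-flats cannot cover $\varphi$ once $\dim\varphi\geq 2$, this forces $\dim\varphi\leq 1$, i.e., $m\geq d$. The same template applied to the univariate restrictions $\dist(\cdot,t^*)$ and $\dist(s^*,\cdot)$ yields $|V_{s^*}|\geq d_s+1$ and $|V_{t^*}|\geq d_t+1$ in one shot, because each single-variable summand $\|s-u\|+\mathrm{const}$ is never constant on any $1$-flat through $s^*\neq u$.

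The main obstacle is closing the last gap $m\geq d$ to $m\geq d+1$, which is the nontrivial content of cases (BB), (BI), and (II). Assume $\dim\varphi=1$ and let $I\subseteq\{1,\ldots,m\}$ index the active path-length functions that are constant along $\varphi$; by the constancy lemma these are exactly the $i$ with $(\vec n_i^s,\vec n_i^t)=\pm(\vec n_1^s,\vec n_1^t)$ after fixing $1\in I$. If $|I|<m$, pick $(s_\epsilon,t_\epsilon)\in\varphi$ arbitrarily close to $(s^*,t^*)$; by continuity the active set at $(s_\epsilon,t_\epsilon)$ is exactly $I$, so the number of shortest paths there is $m'\leq|I|\leq m-1=d-1$, and since $g$ is constant on $\varphi$ the pair $(s_\epsilon,t_\epsilon)$ is itself maximal. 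A second application of Theorem~\ref{theorem:convex_max} at $(s_\epsilon,t_\epsilon)$ then produces a $(d+1-m')\geq 2$-flat of constancy, contradicting the covering step of the previous paragraph. If instead $|I|=m$, all active gradients lie in the $1$-dimensional span of $(\vec n_1^s,\vec n_1^t)$, and a second-order analysis closes the case: pick a nonzero direction $(\dot s,\dot t)$ with $\dot s\perp\vec n_1^s$, $\dot t\perp\vec n_1^t$, and inward with respect to the containing sides in the boundary cases. The first-order derivative of every $\plf_{u_i,v_i}$ vanishes along $(\dot s,\dot t)$, while the Hessian of each norm is strictly positive in this perpendicular-to-radial direction, so every $\plf_{u_i,v_i}$ strictly increases to second order, $g$ does too, and $(s^*,t^*)$ cannot be maximal.

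Combining the two subcases yields $m\geq d+1$ in every case, and the tightness of all bounds is witnessed by the examples of Figure~\ref{fig:examples1} and Appendix~A. The most delicate step is the $|I|=m$ subcase, in particular verifying that an inward-admissible perturbation with the required perpendicularity exists even when $\vec n_1^s$ happens to be tangent to the side containing $s^*$; in that borderline situation the inward-normal direction itself furnishes the required second-order ascent, and checking this uniformly across cases (BB) and (BI) is where I expect most of the bookkeeping to live.
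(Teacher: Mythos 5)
Your overall strategy mirrors the paper's: express $\dist$ locally as the pointwise minimum of path-length functions, apply Theorem~\ref{theorem:convex_max}, exploit that each summand is constant along a unique line to force the constancy flat down to dimension at most one, and then close the final gap by a separate argument. However, there is a genuine gap at the very first step. You assert that ``locally $g=\min_{i=1,\ldots,m}\plf_{u_i,v_i}$ with $m:=|\Pi(s^*,t^*)|$,'' i.e., that the active path-length functions are in bijection with the shortest paths. This fails in the degenerate case where a single shortest path $\pi_i$ goes through three collinear points $s^*, u_i, u_{i,2}$ (or the analogous situation at $t^*$): then $(s^*,t^*)$ lies on $\bd(\Vis(u_i)\times\Vis(v_i))$, so $\plf_{u_i,v_i}$ is not defined on a full open neighborhood of $(s^*,t^*)$, and the pointwise minimum of the $\plf_{u_i,v_i}$ does not agree with $\dist$ near $(s^*,t^*)$. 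Worse, the number of tight corner pairs can exceed $m$, so even if one enlarges the family one may have more than $d$ active functions and Theorem~\ref{theorem:convex_max} no longer applies, removing the contradiction entirely. The paper handles exactly this by introducing the \emph{merged} path-length functions $f_i$ (defined via $u_i'$, $v_i'$ and the functions $h_i^s$, $h_i^t$) on enlarged domains $D_i$, and then establishing Lemmas~\ref{lemma:charct_convex}--\ref{lemma:uniqueline}--\ref{lemma:f_i} for the merged $f_i$ rather than for the raw $\plf_{u,v}$. Without this construction (or a proof that the degenerate configuration cannot occur at a maximal pair, which is not obvious), your argument has a hole.

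Aside from that, two smaller points of divergence from the paper are worth noting. First, your closing step when all active constancy lines coincide (your ``$|I|=m$'' subcase) is handled by a direct second-order analysis along a direction perpendicular to the common gradient. The paper instead invokes Lemma~\ref{lemma:twoindices} to cap the number of coinciding $\ell_i$ at two and Lemma~\ref{lemma:bounds_anchor} (the convex-hull/off-line condition) to rule out the remaining configurations; your variant is a reasonable alternative, though you would need to verify that the perpendicular direction remains feasible in the boundary-constrained cases, which you flag but do not carry out. Second, your clean dimension-counting framing ($d=d_s+d_t$ with a single uniform argument) is more streamlined than the paper's explicit (BB)/(BI)/(II) case split, but this uniformity is precisely what hides the need for the boundary Lemma~\ref{lemma:boundary} and the special treatment of the perpendicular-foot local-minimum case in (BB); a careful write-up would have to re-introduce those checks case by case.
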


Together with the bound $|\Pi(s^*, t^*)| \geq |\Vpair(s^*,t^*)|$,
Theorem~\ref{theorem:charact} immediately implies tight lower bounds
on the number of shortest paths between any maximal pair.
\begin{corollary}
 For any $p\in \Poly$, let $\delta(p) := 0$ if $p\in V$; $\delta(p):=1$ if $p\in \Bd$;
 $\delta(p) := 2$ if $p\in \intr \Poly$.
 If $(s^*, t^*)$ is a maximal pair in $\Poly$, then we have
 \[ |\Pi(s^*, t^*)| \geq \delta(s^*) + \delta(t^*) + 1.\]
 Moreover, the above bound is tight. \hfill\copy\ProofSym
\end{corollary}

To see the tightness of the bounds, we present examples with remarks in \figurename~\ref{fig:examples1} and Appendix A\@.
In particular, one can easily see the tightness of the bounds
on $|V_{s^*}|$ and $|V_{t^*}|$ from shortest path maps $\SPM(s^*)$ and $\SPM(t^*)$,
when $V \cup \{s^*,t^*\}$ is in general position.

We first give an overview of the proof.
The general reasoning is roughly the same for all the different scenarios,
and we thus focus on the case in which $(s^*,t^*)$ is a maximal pair and both $s^*$ and $t^*$ are interior points (Case \textbf{(I-I)}).
Regard the geodesic distance function $\dist$ as a four-variate function
in a small convex neighborhood of $(s^*,t^*)$.
As mentioned in Section \ref{sec:pre}, the geodesic distance is
the pointwise minimum of a finite number of path-length functions.
Since the pair $(s^*,t^*)$ is maximal, we will apply Theorem \ref{theorem:convex_max} and obtain that
the geodesic distance is constant in a flat of dimension $d+1-m=5-m$,
where $m = |\Vpair(s^*, t^*)|$.
On the other hand, we will also show that
the geodesic distance function can only remain constant
in a zero-dimensional flat (i.e., at a point), hence $m\geq 5$.
In the other cases (boundary-interior, boundary-boundary, etc.) the boundary of $\Poly$ introduces additional constraints that reduce the degrees of freedom of the geodesic distance function.
Hence, fewer paths are enough to {\em pin} the solution.

The main technical difficulty of the proof is the fact that
the path-length functions $\plf_{u,v}$ are not globally defined.
Thus, we must properly extend them in a way that all conditions of Theorem~\ref{theorem:convex_max} are satisfied.

\subsection{Proof of Theorem~\ref{theorem:charact}}
We start with several basic observations. The proof of Theorem~\ref{theorem:charact} will be done separately for each case.


The following lemma proves the bounds on $|V_{s^*}|$ and $|V_{t^*}|$ of Theorem~\ref{theorem:charact}.
\begin{lemma}\label{lemma:bounds_anchor}
 Let $(s^*,t^*)$ be a maximal pair.
 \begin{enumerate}
 \item If $t^*\in \Bd$, then $|V_{t^*}|\geq 2$.
 Moreover, if $t^*\in e \in E$, then there exists $v \in V_{t^*}$ such that $v$ is off the line supporting $e$.
 \item If $t^*\in \intr\Poly$, then $|V_{t^*}|\geq 3$ and $t^*$ lies in the interior of the convex hull of $V_{t^*}$.
 \end{enumerate}
\end{lemma}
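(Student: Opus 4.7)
The plan is to analyze $g(t) := \dist(s^*, t)$ locally near $t^*$ as the pointwise minimum of convex path-length functions and invoke Theorem~\ref{theorem:convex_max}. For each shortest path $\pi_i = (s^*, u_i, \ldots, v_i, t^*) \in \Pi(s^*, t^*)$, set
\[
f_i(t) := \plf_{u_i,v_i}(s^*, t) = \|s^*-u_i\| + \dist(u_i, v_i) + \|v_i - t\|;
\]
each $f_i$ is a constant plus $\|v_i - t\|$, hence convex on $\Real^2$, and since $v_i \in V$ but $t^* \notin V$ no $f_i$ attains a local minimum at $t^*$. The paper's uniqueness assumption makes the distinct $f_i$'s realizing $g(t^*)$ correspond bijectively to $V_{t^*}$ (two such functions with the same $v_i$ that agree at $t^*$ must agree everywhere), so in a neighborhood of $t^*$ we have $g(t) = \min_{v \in V_{t^*}} f_v(t)$, and maximality of $(s^*, t^*)$ makes $t^*$ a local maximum of $g$ restricted to $\Poly$.

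For $t^* \in \intr \Poly$, apply Theorem~\ref{theorem:convex_max} with $d = 2$, $m = |V_{t^*}|$, and $C$ an open disc around $t^*$ inside $\intr \Poly$. If $m \le 2$, the theorem produces a flat $\varphi \ni t^*$ of dimension $\ge 1$ on which $g$ is constant; combining $g \le f_v$ with $g(t^*) = f_v(t^*)$ forces $t^*$ to be a local minimum of each $f_v|_\varphi$, hence $v - t^* \perp \varphi$ for every $v \in V_{t^*}$. But for a unit vector $y$ along $\varphi$, Pythagoras gives $\|v - t^* - \lambda y\|^2 = \|v - t^*\|^2 + \lambda^2$, so $f_v|_\varphi > f_v(t^*)$ strictly away from $t^*$, contradicting $g$ being constant on $\varphi$. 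Hence $|V_{t^*}| \ge 3$. To upgrade this to $t^* \in \intr \conv(V_{t^*})$: if instead $t^* \in \bd \conv(V_{t^*})$, take an outward unit normal $y$ to a supporting line at $t^*$, so $(v - t^*) \cdot y \le 0$ for every $v \in V_{t^*}$. A Taylor expansion of $\|v - t^* - \lambda y\|$ shows each $f_v$ strictly exceeds $f_v(t^*)$ for small $\lambda > 0$: at first order when $(v-t^*) \cdot y < 0$, and at second order via $\|v - t^* - \lambda y\|^2 = \|v-t^*\|^2 + \lambda^2$ when $(v-t^*) \cdot y = 0$. So $g(t^* + \lambda y) > g(t^*)$, contradicting maximality.

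For $t^* \in e \in E$, let $L$ be the supporting line of $e$ and $n$ the unit inward normal; locally $\Poly$ equals the closed half-plane $\{t: (t-t^*) \cdot n \ge 0\}$ and the admissible directions form $H := \{y: y \cdot n \ge 0\}$. The same first-/second-order analysis shows $t^*$ being a local max of $g$ on $\Poly$ requires, for every $y \in H$, some $v \in V_{t^*}$ with $(v - t^*) \cdot y > 0$ strictly. Taking $y = n$ yields a corner $v$ with $(v - t^*) \cdot n > 0$, i.e., off $L$ --- this is the moreover clause. For $|V_{t^*}| \ge 2$, suppose $V_{t^*} = \{v\}$ and choose a unit direction $y \in H$ with $y \perp (v - t^*)$ (one of the two unit vectors perpendicular to $v-t^*$ satisfies $y \cdot n \ge 0$; when $v - t^* \parallel n$ both satisfy $y \cdot n = 0$); this gives $(v - t^*) \cdot y = 0$, contradicting the strict requirement. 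Hence $|V_{t^*}| \ge 2$.

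The main obstacle is insisting on the \emph{strict} inequality $(v-t^*) \cdot y > 0$ rather than $\ge 0$: this is precisely what upgrades ``$t^* \in \conv(V_{t^*})$'' to ``$t^* \in \intr \conv(V_{t^*})$'' in the interior case and what forces a corner strictly off $L$ in the boundary case, and it is justified by the strictly positive second-order term in the distance function's Taylor expansion whenever $v \ne t^*$.
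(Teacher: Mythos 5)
Your approach is genuinely different from the paper's for the cardinality bounds. The paper reads $|V_{t^*}| \ge 3$ (interior) and $|V_{t^*}| \ge 2$ (boundary) directly off the structure of the shortest path map: a non-corner local maximizer of $\dist(s^*, \cdot)$ must lie at a vertex of $\SPM(s^*)$ (incident to at least three cells) or at a crossing of an $\SPM(s^*)$ edge with $\bd\Poly$ (incident to at least two), and the incident cells are labeled by last corners. You instead model $\dist(s^*, \cdot)$ locally as a pointwise minimum of the convex functions $f_v(t)=\mathrm{const}_v+\|v-t\|$ and invoke Theorem~\ref{theorem:convex_max} together with a first/second-order Taylor calculation. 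This is a self-contained alternative that does not treat the $\SPM$ vertex structure as a black box, and it unifies the cardinality bounds with the convex-hull / off-the-supporting-line ``moreover'' clauses, which the paper proves by a separate perturbation argument (your Taylor expansion makes explicit the strict increase that the paper simply asserts).

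There is, however, a gap in the step where you assert $g(t) = \min_{v \in V_{t^*}} f_v(t)$ on a full neighborhood of $t^*$, which is what licenses the application of Theorem~\ref{theorem:convex_max} with $\F=\{f_v\}_{v\in V_{t^*}}$. This identity can fail when some shortest path $\pi_i$ is degenerately collinear at its end: if $v_i$, the corner preceding $v_i$ on $\pi_i$, and $t^*$ lie on a common line, then for $t$ just across that line the shortest path from $s^*$ to $t$ has a last corner which is an earlier corner of $\pi_i$ and generally \emph{not} an element of $V_{t^*}$. The function realizing $\dist(s^*, t)$ there is missing from your family, so $g$ as you defined it need not coincide with $\dist(s^*,\cdot)$ near $t^*$, and no neighborhood $C$ of the required form exists. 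This is precisely the degeneracy the paper neutralizes by introducing the merged path-length functions $h_i^t$; their convexity and their unique direction of local constancy are what Lemmas~\ref{lemma:charct_convex} and~\ref{lemma:uniqueline} supply, and with those replacements your Taylor argument carries over essentially verbatim. As written, with the raw $\plf_{u_i,v_i}(s^*,\cdot)$, the claimed local representation of $\dist(s^*, \cdot)$ needs this repair. It is partly because of this delicacy that the paper proves Lemma~\ref{lemma:bounds_anchor} via the $\SPM$ structure and reserves the convex-function machinery for Theorem~\ref{theorem:charact}, where the merged functions are already in place.
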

\begin{proof}
Since $(s^*,t^*)$ is a maximal pair, the function $d_{s^*}(t):=\dist(s^*,t)$ is maximized at $t=t^*$ on a sufficiently small subset $U \subset \Poly$ with $t^*\in U$.
As discussed in Section~\ref{sec:pre}, if $t^*\notin V$, then $t^*$ must be either
a vertex of $\SPM(s^*)$ or an intersection point between an edge of $\SPM(s^*)$ and $\bd \Poly$.
If $t^*\in \intr\Poly$, then $t^*$ should fall into the former case
and hence we have at least three corners $v_1,v_2,v_3 \in V$
determining the vertex $t^*$ of $\SPM(s^*)$.
If $t^* \in \Bd$, then $t^*$ may also occur at the latter case.
In that case, $t^*$ lies on an edge of $\SPM(s^*)$ and thus we have
at least two corners $v_1,v_2 \in V$ determining an edge of $\SPM(s^*)$.

The other claims of the lemma can be shown as follows.
If $t^* \in\intr\Poly$ but $t^*$ lies out of the interior of the convex hull of $V_{t^*}$,
then we can find another point $t \in \Poly$ arbitrarily close to $t^*$ such that $\|t-v_i\| > \|t^*-v_i\|$ for every $v_i\in V_{t^*}$.
This implies that $\dist(s^*,t) > \dist(s^*, t^*)$, contradicting the maximality of $(s^*,t^*)$.
If $t^* \in e \in E$ but every $v_i \in V_{t^*}$ lies on the supporting line $\ell$ of $e$,
then we obtain a strictly larger distance than $\dist(s^*,t^*)$,
as moving $t^*$ in a perpendicular direction to $\ell$.
(Notice that a similar argument can be also found in~\cite[Lemma 2.2]{os-cgd3p-89}.)
\end{proof}

Lemma~\ref{lemma:bounds_anchor} immediately implies the lower bound on $|\Vpair(s^*,t^*)|$
when $s^* \in V$ or $t^* \in V$
since $|\Vpair(s^*,t^*)| \geq \max\{ |V_{s^*}|,|V_{t^*}|\}$.
This completes Cases \textbf{(V-*)}.
Note that the bounds for Case \textbf{(V-V)} are trivial.

{From} now on, we assume that neither $s^*$ nor $t^*$ is a corner of $\Poly$.
This assumption, together with Lemma~\ref{lemma:bounds_anchor}, implies
multiple shortest paths between $s^*$ and $t^*$, and thus $\dist(s^*,t^*) > \|s^*-t^*\|$.
Hence, as discussed in Section~\ref{sec:pre}, any maximal pair falling into one of
Cases \textbf{(B-B)}, \textbf{(B-I)}, and \textbf{(I-I)} appears
as a local maximum of the lower envelope of some path-length functions.

\paragraph{Case \textbf{(I-I)}: When both $s^*$ and $t^*$ lie in $\intr\Poly$.}
We will apply Theorem~\ref{theorem:convex_max} to prove Theorem~\ref{theorem:charact}
for Case \textbf{(I-I)}.
Recall the definition of $\Vpair(s^*, t^*) = \{(u_1, v_1), \ldots, (u_m, v_m)\}$
and $m = |\Vpair(s^*,t^*)|$.
For each $(u_i, v_i) \in \Vpair(s^*,t^*)$,
we have the corresponding shortest path $\pi_i$ between $s^*$ and $t^*$
and $\plf_{u_i,v_i}(s^*,t^*) = \dist(s^*,t^*)$.
Thus, we have at least $m$ functions $f$ among $\{\plf_{u,v} \mid u,v\in V\}$
such that $f(s^*, t^*) = \dist(s^*, t^*)$.
If the number of such path-length functions are exactly $m$,
we can apply Theorem~\ref{theorem:convex_max} directly.

Unfortunately, this is not always the case.
A single shortest path $\pi_i \in \Pi(s^*,t^*)$ may give additional pairs $(u,v)$
of corners with $u, v\in \pi_i$ such that
$(u,v) \neq (u_i, v_i)$ and $\plf_{u,v}(s^*,t^*) = \dist(s^*,t^*)$.
This situation can occur even when the corners of $\Poly$ are in general position.
Observe that this happens only when $u, u_i, s^*$ or $v, v_i, t^*$ are collinear.
In order to resolve this problem, we define the \emph{merged} path-length functions that satisfy
all the requirements of Theorem~\ref{theorem:convex_max} even in degenerate cases.

\begin{figure}[t]
\begin{center}
  \includegraphics[width=.90\textwidth]{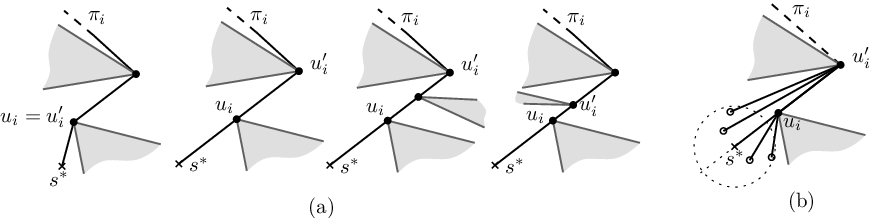}
\end{center}
\caption{\small (a) How to determine $u'_i$. (left to right) $u_i = u'_i$;  $s^*$, $u_i$, and the second corner are collinear; $s^*$ and the first three corners are collinear (b) For points in a small disk $B$ centered at $s^*$ with $B \subset \Vis(u'_i)\cup \Vis(u_i)$, the function $\start_i$ measures the length of the shortest path from $u'_i$ to each.}
\label{fig:degenerate}
\end{figure}

Recall that the combinatorial structure of each shortest path $\pi_i$
can be represented by a sequence $(u_i=u_{i,1},\ldots,u_{i,k}=v_i)$ of corners in $V$.
We define $u'_i$ to be one of the $u_{i,j}$ as follows.
If $s^*$ does not lie on the line $\ell$ through $u_i$ and $u_{i,2}$, then $u'_i := u_i$;
otherwise, if $s^*\in \ell$, then $u'_i := u_{i,j}$, where $j$ is the largest index
such that for any open neighborhood $U \subset \Plane$ of $s^*$
there exists a point $s\in (U \cap \Vis(u_{i,j})) \setminus \ell$.
Note that such $u'_i$ always exists, and if no three of $V$ are collinear,
then we always have either $u'_i=u_i$ or $u'_i=u_{i,2}$.
\figurename~\ref{fig:degenerate}(a) illustrates how to determine $u'_i$.
Also, we define $v'_i$ in an analogous way.
Let $\start_i\colon \Vis(u'_i) \cup \Vis(u_i) \to \Real$ and $\final_i\colon \Vis(v'_i) \cup \Vis(v_i) \to \Real$ be two functions defined as
\begin{eqnarray*}
\start_i(s)&:=& \begin{cases}
             \|s-u'_i\| & \text{if } s\in \Vis(u'_i),\\
             \|s-u_i\| + \|u_i - u'_i\| & \text{if } s\in \Vis(u_i)\setminus \Vis(u'_i);\\
             \end{cases}\\
\final_i(t)&:=& \begin{cases}
             \|t-v'_i\| & \text{if } t\in \Vis(v'_i),\\
             \|t-v_i\| + \|v_i - v'_i\| & \text{if } t\in \Vis(v_i)\setminus \Vis(v'_i).\\
             \end{cases}
\end{eqnarray*}
This allows us to define the merged path-length function $f_i\colon D_i \to \Real$ as
\[  f_i(s,t):=\start_i(s)+\dist(u'_i,v'_i)+\final_i(t), \]
where $D_i:=(\Vis(u'_i) \cup \Vis(u_i)) \times (\Vis(v'_i) \cup \Vis(v_i)) \subseteq \Poly\times\Poly$; see \figurename~\ref{fig:degenerate}(b).
We consider $\Poly \times\Poly$ as a subset of $\Real^4$ and
each pair $(s,t) \in \Poly\times\Poly$ as a point in $\Real^4$.
Also, we denote by $(s_x,s_y)$ the coordinates of a point $s\in\Poly$ and
we write $s=(s_x,s_y)$ or $(s,t) = (s_x, s_y, t_x, t_y)$ by an abuse of notation.
Observe that
\[
f_i(s,t) = \min\{ \plf_{u_i, v_i}(s,t), \plf_{u'_i, v_i}(s,t),
\plf_{u_i, v'_i}(s,t), \plf_{u'_i, v'_i}(s,t) \}
\]
for any $(s,t) \in D_i$ if we define $\plf_{u,v}(s,t)=\infty$ when $s\not\in \Vis(u)$ or $t\not\in \Vis(v)$.

\begin{lemma} \label{lemma:condens}
The following properties hold for the functions $f_i$.
\begin{enumerate} [(i)] \denseitems
 \item $f_i(s^*,t^*) = \dist(s^*,t^*)$ for any $i\in\{1,\ldots,m\}$.
 \item There exists a convex neighborhood $C \subset \Real^4$ of $(s^*,t^*)$
  with $C \subseteq \bigcap_{i=1}^{m} D_i$ such that $\dist(s,t)= \min_{i\in\{1,\ldots,m\}} f_i(s,t)$
  for any $(s,t)\in C$.
 \item Each of the functions $f_i$ for $i \in \{1,\ldots, m\}$ is convex on $C$. 
 \item For any $i\in\{1,\ldots,m\}$, 
  there exists a unique line $\ell_i \subset \Real^4$ through $(s^*, t^*) \in \Real^4$
  such that $f_i$ is constant on $\ell_i \cap C$. 
  Moreover, there exists at most one index $j\neq i$ such that $\ell_i=\ell_j$.
 \item 
  For any $i,j\in\{1,\ldots,m\}$, any $(s,t) \in C$,
  and any neighborhood $U\subseteq C$ of $(s,t)$,
  there exists $(s',t')\in U$ such that
 $f_i(s,t)<f_i(s',t')$ and $f_j(s,t)<f_j(s',t')$.
\end{enumerate}
\end{lemma}
\begin{proof}
\noindent {\bf (i)}
This immediately follows from the fact that $f_i(s^*,t^*)=\plf_{u_i,v_i}(s^*,t^*)$.

\medskip
\noindent {\bf (ii)}
In this proof, we extend $\plf_{u,v}$ to any $(s,t)\in \Poly \times \Poly$
where $\plf_{u,v}(s,t) = \infty$ if $s\notin \Vis(u)$ or $t\notin \Vis(v)$.
By the definition of $f_i$, there exists a small neighborhood $U_i \subset D_i$ of $(s^*,t^*)$ such that $f_i(s,t) = \min\{ \plf_{u_i, v_i}(s,t), \plf_{u'_i, v_i}(s,t), \plf_{u_i, v'_i}(s,t), \plf_{u'_i, v'_i}(s,t) \} = \min_{u,v \in \pi_i \cap V} \plf_{u,v}(s,t)$ for all $(s,t)\in U_i$. We claim that there exists an open convex neighborhood $C \subset \bigcap_i U_i$ such that for any $(s,t)\in C$
\[ \dist(s,t)  = \min_{1\leq i \leq m} f_i(s,t).\]

To prove our claim,
assume to the contrary that for every open convex neighborhood $C\subset \Real^4$
of $(s^*,t^*) \in \Real^4$ there exist a pair $(u,v)$ of corners and $(s,t) \in C$
such that $\dist(s,t) = \plf_{u,v}(s,t) < \min_i f_i(s,t)$.
Note that none of the shortest paths $\pi_i \in \Pi(s^*,t^*)$ between $s^*$ and $t^*$
pass through both of such $u$ and $v$
since, otherwise, we must have $(u, v)\in \Vpair(s^*, t^*)$ and thus $(u,v)=(u_j, v_j)$
for some $1\leq j \leq m$.
This implies that $\dist(s,t) = \plf_{u_j,v_j}(s,t) < \min_i f_i(s,t) = \dist(s,t)$,
a contradiction.


Consider a sequence $C_{1}, C_{2}, \ldots$ of neighborhoods of $(s^*,t^*) \in \Real^4$ that converges to the singleton $\{(s^*,t^*)\}$. Since there are only $n^2$ pairs of corners, there exist a fixed pair $(u_0,v_0)$ of corners and a subsequence $C_{k_1},C_{k_2},\ldots$ converging to the singleton $\{(s^*,t^*)\}$
such that none of the $\pi_i$ pass through both $u_0$ and $v_0$, and
for any integer $j>0$
there exists $(s_j,t_j) \in C_{k_j}$
with
\[\dist(s_j,t_j) = \plf_{u_0,v_0}(s_j,t_j) < \min_{1\leq i\leq m} f_i(s_j,t_j).\]
Since $\lim_{j \to \infty} (s_j,t_j) = (s^*, t^*)$,
it holds that
$\lim_{j\to \infty} \dist(s_j,t_j) = \lim_{j\to\infty} \min_i f_i(s_j,t_j) = \dist(s^*,t^*)$
by Property (i).
By the sandwich theorem, we have
\[ \lim_{j\to\infty} \plf_{u_0,v_0}(s_j, t_j) = \plf_{u_0,v_0}(s^*,t^*) = \dist(s^*,t^*).\]
This implies the existence of the $(m{+}1)$-st shortest path between $s^*$ and $t^*$
since none of the $\pi_i \in \Pi(s^*,t^*)$ contains both $u_0$ and $v_0$,
a contradiction.


\medskip
\noindent {\bf (iii)}
Since the sum of convex functions is a convex function, it suffices to show that $\start_i$ and $\final_i$ are convex. More precisely, for any  $(s_1,t_1), (s_2,t_2) \in C$ and $0 \leq \lambda \leq 1$, we have
\begin{align*}
f_i(\lambda(s_1,t_1)+(1-\lambda)(s_2,t_2))
& =    \start_i(\lambda s_1 + (1-\lambda) s_2) +
        \dist(u'_i,v'_i) +
        \final_i(\lambda t_1 + (1-\lambda) t_2)\\
& \leq \lambda \start_i(s_1) + (1-\lambda) \start_i(s_2) +
       \dist(u'_i,v'_i) +
       \lambda \final_i(t_1) + (1-\lambda) \final_i(t_2)\\
& =    \lambda f_i(s_1,t_1) + (1-\lambda) f_i(s_2,t_2)
\end{align*}
if $\start_i$ and $\final_i$ are convex.

We now show the convexity of $\start_i$
on any convex subset $C \subset \Vis(u'_i)\cup\Vis(u_i)$.
Note that the convexity of $\final_i$ can be shown in the same way.
There are two cases: $u'_i = u_i$ or $u'_i \neq u_i$.
For the former case, $\start_i$ is convex on $C$ since it measures the Euclidean distance
between $u_i$ and a given point in $C$.
For the latter case, let $\ell_0$ be the line through $u_i$, $u'_i$, and also $s^*$.
Then, $C$ may be partitioned by $\ell_0$ into two regions $A_1$ and $A_2$, where $A_1 = C \cap \Vis(u'_i)$ and $A_2 = C\setminus A_1$. Note that $\start_i$ is convex on $A_1$ and on $A_2$. Thus, we are done by checking every point on $\ell_0 \cap C$.

Pick any $s \in \ell_0 \cap C$ and any line $\ell \subset \Real^2$ through $s$.
Let $\theta$ be the angle between $\ell_0$ and $\ell$. If we restrict the domain of $\start_i$ on $\ell \cap C$, then one can check with elementary calculus that both the derivatives of $\|s-u_i\| + \|u_i - u'_i\|$ and of $\|s-u'_i\|$ are equal to $c\cos \theta$ at $s$ for some constant $c$. Hence, $\start_i$ is smooth and convex along $\ell$.
Since we have taken any line $\ell$ through any point on $\ell_0 \cap C$, this suffices to prove the convexity of $\start_i$ on $C$.


\begin{figure}
\begin{center}
  \includegraphics[width=.80\textwidth]{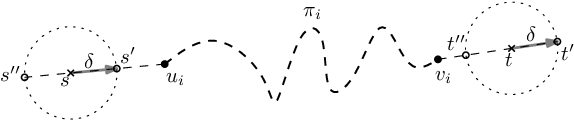}
\end{center}
\caption{\small Illustration to Lemma~\ref{lemma:condens}(v);
for any $(s,t) \in D_i$ and any sufficiently small $\delta$
if we pick $(s',t')$ such that $s'$ is $\delta$ closer to $u_i$ than $s$ and
$t'$ is $\delta$ farther from $v_i$ than $t$,
then we have $f_i(s',t')=f_i(s,t)$. Symmetrically, $f_i(s'', t'') = f_i(s,t)$
with $\|s'' - u_i \| = \|s-u_i\| + \delta$ and $\|t'' - v_i \| = \|t-v_i\| - \delta$.}
\label{fig:uniqueline}
\end{figure}

\medskip
\noindent {\bf (iv)}
Fix any $i\in\{1,\ldots,m\}$.
Any ray $\gamma \subset \Real^4$ with endpoint $(s^*,t^*) \in \Real^4$ can be
determined by three parameters $(\theta_{s^*}, \theta_{t^*}, \lambda)$
with $0\leq \theta_{s^*}, \theta_{t^*} \leq \pi$ and $\lambda \geq 0$ as follows:
Let $\gamma_{s^*}$ and $\gamma_{t^*}$ be the projections of $\gamma$ onto
the $(s_x,s_y)$-plane and the $(t_x,t_y)$-plane, respectively.
Note that $\gamma_{s^*}$ is a ray in the $(s_x,s_y)$-plane with endpoint $s^*$
and $\gamma_{t^*}$ is a ray in the $(t_x,t_y)$-plane with endpoint $t^*$.
Let $\theta_{s^*}$ be the smaller angle at $s^*$
made by $\gamma_{s^*}$ and another ray starting from $s^*$ in direction
away from $u_i$.
Define $\theta_{t^*}$ analogously with $\gamma_{t^*}$, $t^*$, and $v_i$.
The derivative of $f_i$ at $(s^*,t^*)$ along $\gamma$ is represented as
$c(\cos \theta_{s^*} + \lambda \cos\theta_{t^*})$
for some constants $\lambda \geq 0$ and $c > 0$ depending only on $i$.
Note that the second derivative of $f_i$ at $(s^*,t^*)$ along $\gamma$ is derived as
$c\left(\frac{\sin^2\theta_{s^*}}{\|s^*-u_i\|} + \lambda \frac{\sin^2\theta_{t^*}}{\|t^*-v_i\|}\right)$.

Suppose that $f_i$ is constant along $\gamma$ locally around $(s^*,t^*)$.
Then, its first and second derivatives along $\gamma$ should be zero
in a small neighborhood $U \subset\Real^4$ of $(s^*,t^*)$ with $U\subset D_i$.
First, we observe that $\lambda$ should be positive;
if $\lambda = 0$, then $t=t^*$ is fixed while $s$ moves from $s^*$ along $\gamma_{s^*}$,
and hence $f_i$ does not stay constant.
Since every term of the second derivative is nonnegative and $\lambda>0$,
we only obtain two solutions $(\theta_{s^*},\theta_{t^*})=(0,\pi)$ or $(\pi,0)$.
Consequently, we have two such rays $\gamma = (0, \pi, 1)$ or $(\pi, 0, 1)$
that $f_i$ remains constant along $\gamma$.
These two rays form a unique line $\ell_i \subset \Real^4$ through $(s,t)$ such that $f_i$ is constant on $\ell_i \cap U$.
See \figurename~\ref{fig:uniqueline} for more intuitive and geometric description of $\ell_i$.

The projections of $\ell_i$ onto the $(s_x, s_y)$-plane and the $(t_x, t_y)$-plane
appear the lines through $s^*$ and $u_i$ and through $t^*$ and $v_i$, respectively.
Hence, one can easily check that $f_i$ remains constant on $\ell_i \cap D_i$,
which completes the proof of the first part of the claim.

We now show the second part of the claim.
As observed above,
we have that the projection of $\ell_i$ onto the $(s_x, s_y)$-plane is the line through $s^*$ and $u_i$.
Also, the projection of $\ell_i$ onto the $(t_x,t_y)$-plane is the line through $t^*$ and $v_i$.
Hence, $\ell_i=\ell_j$ implies that
$u_i$, $u_j$, $s^*$ are collinear and $v_i$, $v_j$, $t^*$ are collinear.
First, since the pairs $(u_i, v_i)$ are all distinct,
we have $u_i \neq u_j$ or $v_i \neq v_j$.
If $u_i = u_j$ and $v_i\neq v_j$, then one can easily check that $\ell_i \neq \ell_j$
from geometric interpretation of $\ell_i$ as shown in~\figurename~\ref{fig:uniqueline}.
We hence have $u_i \neq u_j$ and $v_i\neq v_j$.
Moreover, $s^*$ must lie in between $u_i$ and $u_j$ and
$t^*$ must lie in between $v_i$ and $v_j$ by definition;
if $u_j$ lies in between $u_i$ and $s^*$, then the first corner of $\pi_i$ from $s^*$
becomes $u_j$ since the three are collinear.
Therefore, for each $i\in\{1,\ldots, m\}$, there is at most one index $j\in\{1,\ldots, m\}$
such that $i\neq j$ and $\ell_i = \ell_j$.

\medskip
\noindent {\bf (v)}
Pick any $i, j \in \{1, \ldots, m\}$ and
consider the sublevel sets
$L_i = \{(\tilde{s},\tilde{t})\in \Real^4 \mid f_i(\tilde{s},\tilde{t}) \leq f_i(s,t)\}$
and
$L_j = \{(\tilde{s},\tilde{t})\in \Real^4 \mid f_j(\tilde{s},\tilde{t}) \leq f_j(s,t)\}$.
Since $f_i$ and $f_j$ are convex and non-constant functions,
$L_i$ and $L_j$ are closed convex sets that have $(s,t)$ on their boundaries.
Therefore, there exist hyperplanes $h_i$ and $h_j$ tangent to
$L_i$ and $L_j$, respectively, at $(s,t)$.
Let $h_i^{\oplus}$ be a closed half-space
bounded by $h_i$ that avoids $L_i$ and
$H'_i:=\{(s',t')\in h_i^\oplus \mid \|(s',t')-(s,t)\|=1\}$
be a closed hemisphere on the unit sphere centered at $(s,t)$.
Define $H'_j$ analogously for $h_j$.

Since $H'_i$ and $H'_j$ are closed hemispheres with a common center,
$H'_i \cap H'_j \neq \emptyset$.
By construction, we have $f_i(s,t)\leq f_i(s',t')$ for any $(s',t')\in H'_i$,
and $f_j(s,t)\leq f_j(s',t')$ for any $(s',t')\in H'_j$.
On the other hand, by Property (iv) of the lemma,
the equality holds only when $(s', t')$ lies on line $\ell_i$ or $\ell_j$, respectively.
Therefore, for any $(s', t') \in (H'_i \cap H'_j) \setminus (\ell_i \cup \ell_j)$,
the claimed inequalities $f_i(s,t) < f_i(s',t')$ and $f_j(s,t) < f_j(s',t')$
hold strictly.
The last task is to check that
$(H'_i \cap H'_j) \setminus (\ell_i \cup \ell_j) \neq \emptyset$,
which follows clear by Lemma~\ref{lemma:hemisphere}.
\end{proof}

Back to the proof of Theorem~\ref{theorem:charact},
we take a convex neighborhood $C$ of $(s^*, t^*)$ satisfying Property (ii)
of Lemma~\ref{lemma:condens} and apply Theorem~\ref{theorem:convex_max}.
Note that Properties (i)--(iii) of Lemma~\ref{lemma:condens}
ensure that the preconditions of Theorem~\ref{theorem:convex_max} are satisfied.

\begin{figure}
\begin{center}
  \includegraphics[width=.60\textwidth]{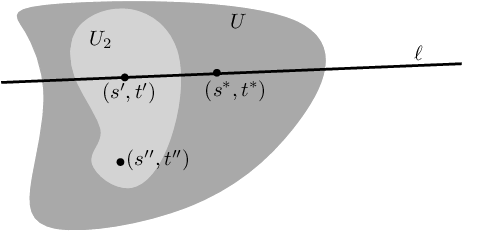}
\end{center}
\caption{\small Proof of Theorem~\ref{theorem:charact} for the \textbf{(I-I)} case; If $\dist(s,t)$ is constant on $\ell$, we can find pairs of points $(s'',t'')$ arbitrarily close to $(s^*,t^*)$ whose geodesic distance is larger than $\dist(s^*,t^*)$.}
\label{fig:local}
\end{figure}

Suppose that $m < 5$.
Then, by Theorem~\ref{theorem:convex_max},
there exists at least one line $\ell \in \Real^4$ through $(s^*,t^*)$ such that
$\dist$ is constant on $\ell \cap C$.
Since $(s^*,t^*)$ is a local maximum, there exists a small neighborhood $U \subset C$
of $(s^*,t^*)$ such that $\dist(s,t)\leq \dist(s^*,t^*)$ for all $(s,t)\in U$.
By Property (iv) of Lemma~\ref{lemma:condens}, at most two functions $f_i$ are constant
on $\ell\cap U$.
Without loss of generality, we can assume that functions $f_3,\ldots, f_m$ are not constant.
Since the geodesic distance function $\dist$ is constant on $\ell\cap U$ and
$\dist(s,t)= \min_{i\in\{1,\ldots,m\}} f_i(s,t)$, any of $f_3, \ldots, f_m$
must strictly increase in both directions along $\ell$.
That is, for any $(s',t')\in \ell\cap U$ with $(s',t')\neq(s^*,t^*)$
and for all $i\geq 3$, we have $\min\{f_1(s',t'),f_2(s',t')\}<f_i(s',t')$.
Thus, there exists a small neighborhood $U'\subseteq U$ of $(s',t')$
such that $\dist(s,t)= \min\{f_1(s,t),f_2(s,t)\}$ for all $(s,t)\in U'$.
However, by Property (v) of Lemma~\ref{lemma:condens},
there exists a pair $(s'',t'')\in U'$ such that
$f_1(s',t')<f_1(s'',t'')$ and $f_2(s',t')<f_2(s'',t'')$,
contradicting the maximality of $(s^*,t^*)$.
See Figure~\ref{fig:local}.
Hence, we achieve a bound $m=|\Vpair(s^*,t^*)| \geq 5$,
as claimed in Case \textbf{(I-I)} of Theorem~\ref{theorem:charact}.


\paragraph{Case \textbf{(B-B)}: When both $s^*$ and $t^*$ lie on $\Bd$.}
In this case, we assume that $s^*\in e_s \in E$ and $t^*\in e_t \in E$.
The outline of proof is analogous to the above discussion for Case \textbf{(I-I)};
the only difference is that the search space has a lower dimension.

Let $p$ be an endpoint of $e_s$ and $l_s$ be the length of $e_s$.
We denote by $s(\zeta_s)$ the unique point on $e_s$ such that $\|s(\zeta_s) - p\| = \zeta_s$ for any $0 < \zeta_s < l_s$.
Thus, $s \colon (0, l_s) \to e_s$ establishes a bijection
between the open interval $(0,l_s) \subset \Real$ and the segment $e_s \subset \Real^2$
except its endpoints.
We also define $t(\zeta_t)$, analogously.
Then, we let $\bar{f_i} \colon D_i \to \Real$ be a function
defined as the composition of $f_i$ and the two bijections:
\[ \bar{f_i}(\zeta_s, \zeta_t) := \start_i(s(\zeta_s)) + \dist(u'_i,v'_i) + \final_i(t(\zeta_t)),\]
where the domain of $\bar{f_i}$ is $D_i := s^{-1}((\Vis(u'_i) \cup \Vis(u_i)) \cap e_s) \times t^{-1}((\Vis(v'_i) \cup \Vis(v_i)) \cap e_t)$. We consider $D_i$ as a subset of $\Real^2$ and each pair $(\zeta_s, \zeta_t) \in D_i$ as a point in $\Real^2$. Let $\zeta^*_s$ and $\zeta^*_t$ be real numbers such that $s^* = s(\zeta^*_s)$ and $t^* = t(\zeta^*_t)$. We obtain the analogue of Lemma \ref{lemma:condens}.

\begin{lemma} \label{lemma:condensBB}
The following properties hold for the functions $\bar{f_i}$.
\begin{enumerate} [(i)] \denseitems
 \item $\bar{f_i}(\zeta_s^*,\zeta_t^*) = \dist(s(\zeta_s^*), t(\zeta_t^*))$ for any $i\in\{1,\ldots,m\}$.
 \item There exists a convex neighborhood $C \subset \Real^2$ of $(\zeta^*_s,\zeta^*_t)$
with $C \subseteq \bigcap_{i=1}^{m} D_i$ such that $\dist(s(\zeta_s), t(\zeta_t))= \min_{i\in\{1,\ldots,m\}} \bar{f_i}(\zeta_s, \zeta_t)$ for any $(\zeta_s, \zeta_t)\in C$.
\item Each of the functions $\bar{f_i}$ for $i \in \{1,\ldots, m\}$ is convex on $C$.
\item If there exists a  line $\ell_i \subset \Real^2$ such that $\bar{f_i}$ is constant on $\ell_i \cap C$, then $u_i$ lies on the line supporting $e_s$  and $v_i$ lies on the line supporting $e_t$.
 \item For any $i\in\{1,\ldots,m\}$, any $(\zeta_s,\zeta_t) \in C$, and
  any neighborhood $U\subseteq C$ of $(\zeta_s,\zeta_t)$,
  there exists $(\zeta_s',\zeta_t')\in U$ such that
  $\bar{f_i}(\zeta_s,\zeta_t)<\bar{f_i}(\zeta_s',\zeta_t')$.
\end{enumerate}
\end{lemma}
Note that the above claims are almost identical to those of Lemma \ref{lemma:condens}. The results have been adapted taking into account that $\bar{f_i}$ is the composition of $f_i$ and both $\zeta_s$ and $\zeta_t$. Proofs follow verbatim, thus we omit them. Property (v) is the only exception: since the degrees of freedom have decreased, we cannot certify the existence of points arbitrarily close that increase two functions $\bar{f_i}$. Instead, we will use the second property of Lemma \ref{lemma:bounds_anchor} to lead to a contradiction.

Recall that by the first claim of Lemma~\ref{lemma:bounds_anchor} we have $m\geq 2$.
Thus, we are done by showing that the case $m=2$ is not possible.
Suppose that $m=2$.
Then, by Theorem~\ref{theorem:convex_max},
there exists a line $\ell \subset \Real^2$ through $(\zeta^*_s, \zeta^*_t) \in \Real^2$
such that $\dist$ is constant on $\ell \cap C$.
By the second claim of Lemma~\ref{lemma:bounds_anchor},
there exists a vertex $v\in V_{s^*}$ off the line supporting $e_s$.
Without loss of generality, we assume that $v = v_2$. 
By Property (iv) of Lemma~\ref{lemma:condensBB},
function $\bar{f_2}$ cannot remain constant in any line.

Now, we proceed as in Case \textbf{(I-I)}.
Consider any small neighborhood $U\subseteq C$ of $(\zeta_s^*, \zeta_t^*)$.
Any point $(\zeta_s', \zeta_t')\in\ell\cap C$ with $(\zeta_s', \zeta_t')\neq (\zeta^*_s, \zeta^*_t)$ satisfies the strict inequality $\dist(s(\zeta_s'), t(\zeta_t'))=\bar{f_1}(\zeta_s', \zeta_t') < \bar{f_2}(\zeta_s', \zeta_t')$,
since $\bar{f_2}$ cannot remain constant and $\dist$ is a local maximum.
Thus, there exists a sufficiently small neighborhood $U' \subseteq U$
of $(\zeta_s', \zeta_t')$ such that
$\dist(s(\zeta_s), t(\zeta_t))=\bar{f_1}(\zeta_s, \zeta_t)$
for all $(\zeta_s, \zeta_t)\in U'$.

Now, we apply Property (v) of Lemma~\ref{lemma:condensBB} to obtain
a point $(\zeta_s'', \zeta_t'')$ arbitrarily close to $(\zeta_s^*, \zeta_t^*)$
with strict inequality $\dist(s(\zeta_s''), t(\zeta_t'')) > \dist(s(\zeta_s^*), t(\zeta_t^*)) = \dist(s^*, t^*)$,
contradicting the maximality of $(s^*, t^*)$.
We hence conclude that $m=|\Vpair(s^*,t^*)| \geq 3$
for Case \textbf{(B-B)} when both $s^*$ and $t^*$ lie on $\Bd$.

\paragraph{Case \textbf{(B-I)}: When $s^* \in \Bd$ and $t^*\in \intr\Poly$.}
This case is a mixture of the two previous cases.
Without loss of generality, we can also assume that $s^* \in e_s \in E$ and $t^* \in \intr\Poly$.
We define $s(\zeta_s)$ as in Case \textbf{(B-B)} with $s(\zeta^*_s) = s^*$. We now define function $\hat{f_i} \colon D_i \to \Real$ as $\hat{ f_i}(\zeta_s, t_x, t_y) := \start_i(s(\zeta_s)) + \dist(u'_i, v'_i) + \final_i(t_x,t_y)$, where $D_i := s^{-1}((\Vis(u'_i) \cup \Vis(u_i)) \cap e_s) \times (\Vis(v'_i) \cup \Vis(v_i))$ is a subset of $\Real^3$.

We obtain another analogy of Lemmas~\ref{lemma:condens} and~\ref{lemma:condensBB}.
\begin{lemma} \label{lemma:condensBI}
The following properties hold for the functions $\hat{f_i}$.
\begin{enumerate} [(i)] \denseitems
 \item $\hat{f_i}(\zeta_s^*,t^*) = \dist(s(\zeta_s^*), t)$ for any $i\in\{1,\ldots,m\}$.
 \item There exists a convex neighborhood $C \subset \Real^3$ of $(\zeta^*_s,t^*)$
with $C \subseteq \bigcap_{i=1}^{m} D_i$ such that $\dist(s(\zeta_s), t)= \min_{i\in\{1,\ldots,m\}} \hat{f_i}(\zeta_s, t)$ for any $(\zeta_s, t)\in C$.
\item Each of the functions $\hat{f_i}$ for $i\in \{1,\ldots,m\}$ is convex on $C$.
\item For any $i\in\{1,\ldots,m\}$, 
 there exists a unique line $\ell_i \subset \Real^3$ through $(\zeta_s^*, t^*) \in \Real^3$
 such that $\hat{f_i}$ is constant on $\ell_i \cap C$.
 Moreover, there is at most one index $j\neq i$ such that $\ell_i=\ell_j$.
 \item For any $i,j\in\{1,\ldots,m\}$, any $(\zeta_s,t) \in C$, and
 any neighborhood $U\subseteq C$ of $(\zeta_s,t)$,
 there exists $(\zeta_{s}',t')\in U$
 such that $\hat{f_i}(\zeta_s,t)<\hat{f_i}(\zeta_{s}',t')$
 and $\hat{f_j}(\zeta_s,t)<\hat{f_j}(\zeta_{s}',t')$.
\end{enumerate}
\end{lemma}
We proceed as in Case~\textbf{(B-B)}.
Suppose $m\leq 3$ and apply Theorem~\ref{theorem:convex_max}.
Then, we obtain a line $\ell$ such that the geodesic distance (composed with $\zeta_s$)
is constant on $\ell \cap C$.
However, since at most two functions $f_i$ can remain constant on $\ell$
by Property (iv) of Lemma~\ref{lemma:condensBI},
there must exist a point arbitrarily close to $(\zeta^*_s,t^*)$
with strictly larger function value.
Details are almost identical to the previous cases,
and we get the claimed bound $m = |\Vpair(s^*, t^*)| \geq 4$
for Case~\textbf{(B-I)}.

The claimed bounds on $|V_{s^*}|$ and $|V_{t^*}|$ are shown by Lemma~\ref{lemma:bounds_anchor}, which completes the proof of Theorem~\ref{theorem:charact}.
\hfill\copy\ProofSym

\section{Computing the Geodesic Diameter} \label{sec:algorithm}
Since a diametral pair is in fact maximal, it falls into one of the cases shown in Theorem~\ref{theorem:charact}. In order to find a diametral pair we examine all possible scenarios accordingly.

Cases \textbf{(V-*)}, where at least one point is a corner in $V$, can be handled
in $O(n^2 \log n)$ time by computing $\SPM(v)$ for every $v\in V$ and
traversing it to find the farthest point from $v$, as discussed in Section~\ref{sec:pre}.
We thus focus on Cases \textbf{(B-B)}, \textbf{(B-I)}, and \textbf{(I-I)},
where a diametral pair consists of two non-corner points.

From the computational point of view, the most difficult case corresponds to Case \textbf{(I-I)} of Theorem~\ref{theorem:charact}.
In particular, if $|V_{s^*}|=|V_{t^*}|=5$, ten corners of $V$ are involved
and thus any exhaustive method would check $O(n^{10})$ possibilities
to find maximal pairs of this case.
Observe that such a case can happen even under a general position assumption
as shown in Appendix~A.3.
By Theorem~\ref{theorem:charact}, 
in Case \textbf{(I-I)}, it is guaranteed that there are at least five distinct pairs
$(u_1, v_1), \ldots, (u_5,v_5)$ of corners in $V$
such that $\plf_{u_i,v_i}(s^*,t^*) = \dist(s^*,t^*)$ for any $i\in \{1,\ldots,5\}$ and
the system of equations $\plf_{u_1,v_1}(s,t)= \cdots = \plf_{u_5,v_5}(s,t)$
determines a $0$-dimensional zero set,
corresponding to a constant number of candidate pairs in $\intr\Poly\times\intr\Poly$.
On the other hand, each path-length function $\plf_{u,v}$
is an algebraic function of degree at most $4$.
Thus, given five distinct pairs $(u_i,v_i)$ of corners,
we can compute all candidate pairs $(s,t)$ in $O(1)$ time by solving the system.\footnote{%
Here, we assume that fundamental operations on a constant number
of polynomials of constant degree with a constant number of variables
can be performed in constant time.}
For each candidate pair 
we compute the geodesic distance between the pair to check its validity.
Since the geodesic distance between any two points $s,t \in \Poly$ can be computed
in $O(n\log n)$ time~\cite{hs-oaespp-99},
we obtain a brute-force $O(n^{11}\log n)$-time algorithm,
checking $O(n^{10})$ candidate pairs obtained from all possible combinations of $10$ corners in $V$.

As a different approach, one can exploit the \emph{$\SPM$-equivalence decomposition} of $\Poly$,
which subdivides $\Poly$ into regions such that the shortest path map
of any two points in a common region are {\em topologically equivalent}~\cite{cm-tpespqp-99}.
It is not difficult to see that
if $(s,t)$ is a pair of points that equalizes any five path-length functions,
then both $s$ and $t$ appear as vertices of the decomposition.
However, the current best upper bound on the complexity of the $\SPM$-equivalence decomposition
is $O(n^{10})$~\cite{cm-tpespqp-99}, and thus this approach hardly leads to a remarkable improvement.

Instead, we do the following for Case \textbf{(I-I)} with $|V_{s^*}| = 5$. 
We choose any five corners $u_1,\ldots,u_5 \in V$ (as a candidate for the set $V_{s^*}$) and overlay their shortest path maps $\SPM(u_i)$.
Since each $\SPM(u_i)$ has $O(n)$ complexity, the overlay consists of $O(n^2)$ cells.
Any cell of the overlay is the intersection of five cells associated with $v_1,\ldots,v_5 \in V$
in $\SPM(u_1),\ldots,\SPM(u_5)$, respectively.
Choosing a cell of the overlay, we get five (possibly, not distinct) corners $v_1,\ldots, v_5$
and a constant number of candidate pairs by solving the system
$\plf_{u_1,v_1}(s,t)=$ $\cdots = \plf_{u_5,v_5}(s,t)$.
We iterate this process for all possible tuples of five corners $u_1,\ldots,u_5$,
to obtain a total of $O(n^7)$ candidate pairs,
roughly spending $O(n^7 \log n)$ time.
Note that the other subcases with $|V_{s^*}| \leq 4$ can be handled similarly,
resulting in $O(n^6)$ candidate pairs.

The validity of each candidate pair $(s,t)$ is examined
by checking if the paths from $s$ through $u_i$ and $v_i$ to $t$
are indeed shortest.
For the purpose, we evaluate its geodesic distance $\dist(s,t)$
using a two-point query structure of Chiang and Mitchell~\cite{cm-tpespqp-99}.
For a fixed parameter $0<\delta\leq 1$ and any fixed $\epsilon>0$,
one can construct, in $O(n^{5+10\delta+\epsilon})$ time,
a data structure that supports $O(n^{1-\delta}\log n)$-time
two-point shortest path queries.
The total running time is $O(n^7 \log n) + O(n^{5+10\delta+\epsilon}) + O(n^7)\times O(n^{1-\delta}\log n)$.
We set $\delta = \frac{3}{11}$ to optimize the running time
to $O(n^{7+\frac{8}{11}+\epsilon})$.

Also, we can use an alternative two-point query data structure whose performance
is sensitive to the number $h$ of holes~\cite{cm-tpespqp-99}:
after $O(n^5)$ preprocessing time using $O(n^5)$ storage,
two-point queries can be answered in $O(\log n +h)$ time.\footnote{%
If $h$ is relatively small,
one could use the structure of Guo, Maheshwari and Sack~\cite{gms-spqpd-08}
which answers a two-point query in $O(h \log n)$ time after $O(n^2\log n)$ preprocessing time
using $O(n^2)$ storage, or another structure by Chiang and Mitchell~\cite{cm-tpespqp-99}
that supports a two-point query in $O(h\log n)$ time,
spending $O(n+h^5)$ preprocessing time and storage.}
Using this alternative structure,
the total running time of our algorithm amounts to $O(n^7 (\log n + h))$.
Note that this method gives a better bound than the previous one
when $h= O(n^\frac{8}{11})$.

The other cases can be handled analogously with strictly better time bound.
For Case \textbf{(B-I)}, by Theorem~\ref{theorem:charact},
we have $|\Vpair(s^*, t^*)| \geq 4$ and thus there are at least four distinct pairs
$(u_i, v_i)$ of corners with $\plf_{u_i,v_i}(s^*, t^*) = \dist(s^*, t^*)$.
Here, we handle only the case of $|V_{t^*}| = 3$ or $4$.
For the subcase with $|V_{t^*}|= 4$,
we choose any four corners from $V$ as $v_1, \ldots, v_4$ as a candidate for $V_{t^*}$
and overlay their shortest path maps $\SPM(v_i)$.
The overlay, together with $V$, decomposes $\bd\Poly$ into $O(n)$ intervals.
Each such interval determines $u_1,\ldots, u_4$ as above, and the side $e_s \in E$
on which $s^*$ should lie.
Now, we have a system of four equations on four variables:
three from the corresponding path-length functions $\plf_{u_i,v_i}$ with $1 \leq i \leq 4$
which should be equalized at $(s^*,t^*)$, and the fourth from the supporting line of $e_s$.
Solving the system, we get a constant number of candidate maximal pairs,
again by Theorem~\ref{theorem:charact}. 
In total, we obtain $O(n^5)$ candidate pairs.
The other subcase with $|V_{t^*}| = 3$ can be handled similarly,
resulting in $O(n^4)$ candidate pairs.
As above, we can exploit two different structures for two-point queries.
Consequently, we can handle Case \textbf{(B-I)} in $O(n^{5+\frac{10}{11}+\epsilon})$ or $O(n^5 (\log n + h))$ time.

In Case \textbf{(B-B)} when $s^*, t^* \in \Bd$, 
we have $|V_{s^*}| = 2$ or $3$.
For the subcase with $|V_{s^*}|=3$,
we choose three corners as a candidate of $V_{s^*}$ and
take the overlay of their shortest path maps $\SPM(u_i)$.
It decomposes $\bd\Poly$ into $O(n)$ intervals.
Each such interval determines three corners $v_1, v_2, v_3$ forming $V_{t^*}$ and
a side $e_t \in E$ on which $t^*$ should lie.
Note that we have only three equations so far;
two from the three path-length functions and the third from the line supporting to $e_t$.
Since $s^*$ also should lie on a side $e_s\in E$ with $e_s\neq e_t$,
we need to fix such a side $e_s$ that $\bigcap_{1\leq i\leq 3} \Vis(u_i)$ intersects $e_s$.
In the worst case, the number of such sides $e_s$ is $\Theta(n)$.
Thus, we have $O(n^5)$ candidate pairs for Case \textbf{(B-B)};
again, the other subcase with $|V_{s^*}| = 2$ contributes to a smaller number $O(n^4)$ of candidate pairs.
Testing each candidate pair can be 
done as above,
resulting in $O(n^{5+\frac{10}{11}+\epsilon})$ or $O(n^5(\log n + h))$ total running time.

Alternatively, 
one can exploit a two-point query structure only for boundary points on $\bd\Poly$
for Case \textbf{(B-B)}.
The two-point query structure by Bae and Okamato~\cite{bo-qtbpsppd-09}  builds
an explicit representation of the graph of the lower envelope of the path-length functions
$\plf_{u,v}$ restricted on $\bd \Poly \times \bd \Poly $ in $O(n^5 \log n \log^* n)$ time.\footnote{%
More precisely, in $O(n^4\lambda_{65}(n)\log n)$ time,
where $\lambda_{m}(n)$ stands for the maximum length of a Davenport-Schinzel
sequence of order $m$ on $n$ symbols.}
Since $|\Vpair(s^*,t^*)|\geq 3$ in Case \textbf{(B-B)},
such a pair appears as a vertex on the lower envelope.
Hence, we are done by traversing all the vertices of the lower envelope.

The following table summarizes the discussion so far.
\begin{center} \small
\begin{tabular}{|c||c|c|}
\hline
Case     & Independent of $h$ & Dependent on $h$ \\
\hline
\hline
\textbf{(V-*)} & \multicolumn{2}{|c|}{$O(n^2 \log n)$} \\
\hline
\textbf{(B-B)}      &  $O(n^5\log n \log^* n)$ & $O(n^5(\log n + h))$  \\
\hline
\textbf{(B-I)}      & $O(n^{5+\frac{10}{11}+\epsilon})$ & $O(n^5(\log n + h))$ \\
\hline
\textbf{(I-I)}      & $O(n^{7+\frac{8}{11}+\epsilon})$ & $O(n^7 (\log n + h))$\\
\hline
\end{tabular}
\end{center}

As Case \textbf{(I-I)} is the bottleneck, we conclude the following.
\begin{theorem}\label{theorem:alg_diam}
 Given a polygonal domain having $n$ corners and $h$ holes,
 the geodesic diameter and a diametral pair can be computed
 in $O(n^{7+\frac{8}{11}+\epsilon})$ or $O(n^7  (\log n+ h))$ time
 in the worst case, where $\epsilon$ is any fixed positive number.
 \hfill\copy\ProofSym
\end{theorem}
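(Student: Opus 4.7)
The plan is to exploit Theorem~\ref{theorem:charact} as a structural classifier: any diametral pair $(s^*,t^*)$ must be maximal, hence falls into one of the six cases \textbf{(VV)}, \textbf{(VB)}, \textbf{(VI)}, \textbf{(BB)}, \textbf{(BI)}, \textbf{(II)}. For each case I would generate a polynomial-size family of ``candidate pairs'' that provably contains every maximal pair of that type, then verify each candidate by computing its true geodesic distance and taking the global maximum. The correctness of the whole algorithm reduces to (a) Theorem~\ref{theorem:charact} restricting the combinatorial form of $(s^*,t^*)$, and (b) the distance lookup at each candidate being exact.

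The easy end of the classification comprises the three \textbf{(V--)} cases, which I would dispatch by computing $\SPM(v)$ for every $v\in V$ in $O(n\log n)$ time each and scanning it for the farthest point, for a total of $O(n^2\log n)$. For Case \textbf{(II)}, which is the bottleneck, Theorem~\ref{theorem:charact} guarantees $|V_{s^*}|\ge 3$ and $|V_{t^*}|\ge 3$, and its proof actually shows that after grouping collinear anchor corners via the merged path-length functions, there are five distinct pairs $(u_i,v_i)$ with $\plf_{u_i,v_i}(s^*,t^*) = \dist(s^*,t^*)$. I would therefore enumerate ordered $5$-tuples $u_1,\dots,u_5\in V$ as a guess for $V_{s^*}$ (the subcase $|V_{s^*}|\leq 4$ producing strictly fewer candidates), overlay the five maps $\SPM(u_1),\dots,\SPM(u_5)$, and read off the matching $v_i$ from each of the $O(n^2)$ cells of the overlay; solving the (constant-degree, constant-variable) system $\plf_{u_1,v_1}=\cdots=\plf_{u_5,v_5}$ then yields $O(1)$ candidate pairs per tuple, $O(n^7)$ in total, found in $O(n^7\log n)$ time. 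Cases \textbf{(BI)} and \textbf{(BB)} are strictly cheaper: by Theorem~\ref{theorem:charact}, I need only $|\Pi|\ge 4$ (resp.\ $3$), and the constraints that $s^*$ lie on a side $e_s$ and/or $t^*$ on a side $e_t$ contribute additional linear equations that cut the variable count, so the overlay method yields $O(n^5)$ candidates in each case.

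For verification I would plug in one of the existing two-point shortest-path structures of Chiang and Mitchell~\cite{cm-tpespqp-99}. The first, with $O(n^{5+10\delta+\epsilon})$ preprocessing and $O(n^{1-\delta}\log n)$ query, balances against the $O(n^7)$ candidates at $\delta = \tfrac{3}{11}$ to give $O(n^{7+8/11+\epsilon})$; the second, $O(n^5)$ preprocessing with $O(\log n + h)$ query time, yields $O(n^7(\log n + h))$. For Case \textbf{(BB)} I would additionally note that, since both points lie on $\bd \Poly$, the lower envelope of $\{\plf_{u,v}\}$ restricted to $\bd\Poly\times\bd\Poly$ already contains every vertex where three functions meet, so the Bae--Okamoto structure~\cite{bo-qtbpsppd-09} lets me enumerate candidates directly by traversing that envelope in $O(n^5\log n\log^*n)$ time.

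The hard part, and the one the proof really has to justify carefully, is the Case \textbf{(II)} counting: the argument that overlaying five shortest path maps indexed by a guessed $V_{s^*}$ suffices to capture every diametral pair. This requires verifying that the system extracted from a single overlay cell indeed has $0$-dimensional zero set at a generic diametral pair (an appeal to the part of Theorem~\ref{theorem:charact}'s proof that rules out degenerate flats $\varphi$ of positive dimension) and that each $\plf_{u,v}$ has the claimed algebraic degree so that solving the system truly runs in $O(1)$ per cell. Once that is in place, summing the running times of Cases \textbf{(V--)}, \textbf{(BB)}, \textbf{(BI)}, \textbf{(II)} according to the table in Section~\ref{sec:algorithm} gives the two claimed bounds, with Case \textbf{(II)} as the dominating term.
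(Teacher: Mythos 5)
Your proposal is essentially identical to the paper's own argument: the same case classification via Theorem~\ref{theorem:charact}, the same $\SPM(v)$ sweep for the \textbf{(V--)} cases, the same five-way $\SPM$-overlay enumeration yielding $O(n^7)$ candidate pairs for Case \textbf{(II)} (with the same $\delta=\tfrac{3}{11}$ balancing against the Chiang--Mitchell query structure and the same $O(\log n + h)$-query alternative), the same $O(n^5)$-candidate reductions for \textbf{(BI)} and \textbf{(BB)}, and the same use of the Bae--Okamoto lower-envelope structure for \textbf{(BB)}. You also correctly flag the one step that requires genuine care, namely that the system extracted from an overlay cell is zero-dimensional, and correctly attribute its justification to the proof of Theorem~\ref{theorem:charact}; this matches the paper.
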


%

\section{Concluding Remarks} \label{sec:conclusion}

We have presented the first algorithms that compute the geodesic diameter of a given polygonal domain. As mentioned in the introduction, a similar result for convex 3-polytopes was shown in~\cite{os-cgd3p-89}. We note that, although the main result of this paper is similar, the techniques used in the proof are quite different. Indeed, the key requirement for our proof is the fact that shortest paths in our environment are polygonal chains whose vertices are in $V$, a claim that does not hold in higher dimensions (even in 2.5-D surfaces). It would be interesting to find other environments in which similar result holds.

Another interesting question would be finding out how many maximal pairs a polygonal domain can have. The analysis of Section~\ref{sec:algorithm} gives an $O(n^7)$ upper bound. On the other hand, one can easily construct a simple polygon in which the number of maximal pairs is $\Omega(n^2)$. Any improvement on the $O(n^7)$ upper bound would lead to an improvement in the running time of our algorithm.

Though in this paper we have focused on \emph{exact} geodesic diameters only,
an efficient algorithm for finding an \emph{approximate} geodesic diameter would be also interesting.
Notice that any point $s\in \Poly$ and its farthest point $t\in\Poly$
yield a 
$2$-approximate diameter;
that is, 
$\diam(\Poly) \leq 2 \max_{t\in \Poly} \dist(s,t)$ for any $s\in\Poly$.
Also, based on a standard technique using a rectangular grid with a specified parameter $0 < \epsilon < 1$,
one can obtain a $(1+\epsilon)$-approximate diameter
in $O((\frac{n}{\epsilon^2}+\frac{n^2}{\epsilon})\log n)$ time as follows.
Scale $\Poly$ so that $\Poly$ can fit into a unit square, and
partition $\Poly$ with a grid of size $\epsilon^{-1}\times \epsilon^{-1}$.
We define the set $D$ as the point set that has the center of grid squares (that have a nonempty intersection with $\Poly$) and intersection points between boundary edges and grid segments. We now can discretize the diameter problem  by considering only geodesic distances between pairs of points of $D$.
It turns out that 
the distance between any two points $s$ and $t$ in $\Poly$ is within a $(1+\epsilon)$ factor
of the distance between two points of $D$.\footnote{%
The idea of this approximation algorithm is due to Hee-Kap Ahn.}
Breaking the quadratic bound in $n$ for the $(1+\epsilon)$-approximate diameter seems
a challenge at this stage.
We conclude by posing the following problem:
for any or some $0 < \epsilon < 1$, is there any algorithm that
finds a $(1+\epsilon)$-approximate diametral pair
in $O(n^{2-\delta}\cdot \mathrm{poly}(1/\epsilon))$ time for some positive $\delta>0$?

\paragraph{Acknowledgements}
We thank Hee-Kap
Ahn, Jiongxin Jin, Christian Knauer, and
Joseph Mitchell for fruitful
discussion.
We also thank 
Joseph O'Rourke for pointing out the reference \cite{z-ipt-07}.



\pagebreak

\appendix
{\large \bfseries \noindent APPENDIX}
\section{More Examples and Remarks}\label{sec:moreexample}
In this section, we show more constructions of polygonal domains and their diametral pairs
with remarks.
In the figures, we keep the following rules:
the boundary $\bd\Poly$ is depicted by dark gray segments and the interior of holes
by light gray region. A diametral pair is given as $(s^*,t^*)$ and shortest paths
between $s^*$ and $t^*$ are described as black dashed polygonal chains.
\subsection{Examples where at least one point of a diametral pair lies on $\bd\Poly$}

\begin{figure}[h]
\begin{center}
  \includegraphics[width=.95\textwidth]{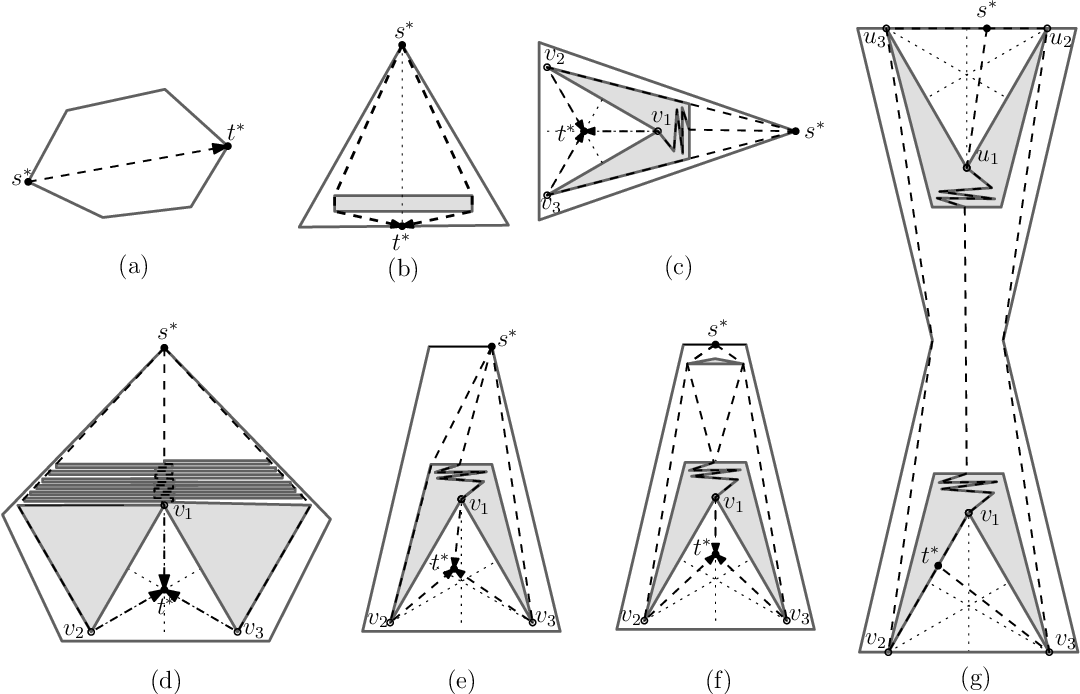}
\end{center}
\caption{ \small
 (a--c) Polygonal domains whose geodesic diameter is determined by a corner $s^*$
 and (d--g) variations of the construction (c).
 (a) When both $s^*$ and $t^*$ are corners;
 (b) When $t^*$ is a point on $\bd\Poly$;
 (c) When $t^* \in \intr\Poly$.
 This polygonal domain consists of two holes, forming a narrow corridor and
 three shortest paths between $s^*$ and $t^*$.
 Here, we have $\dist(s^*, v_1) = \dist(s^*,v_2) = \dist(s^*,v_3)$
 and $t^*$ is indeed the vertex of $\SPM(s^*)$ defined by $v_1, v_2, v_3$;
 (d) Variation of (c) with all convex holes;
 (e) Three shortest paths are not enough to determine a boundary-interior diametral pair;
 (f) If we add one more hole, then the diameter is determined by $s^* \in \Bd$ and $t^*\in\intr\Poly$
  with four shortest paths;
 (g) A polygonal domain made by attaching two copies of (e) and modifying
 it to have $\dist(u_1, v_1) = \dist(u_2,v_2) = \dist(u_3,v_3)$.
 Observe that, in this polygonal domain, the diameter is determined by two boundary points
 with three shortest paths.
 }
\label{fig:examples2}
\end{figure}
%

Note that, as expected, every example in \figurename~\ref{fig:examples2} obeys
Theorem~\ref{theorem:charact}.
An interesting construction is \figurename~\ref{fig:examples2}(g),
where neither of the two centers of $\triangle u_1u_2u_3$ and of $\triangle v_1v_2v_3$
appears in any diametral pair.
Also note that \figurename~\ref{fig:examples2}(d) consists of \emph{convex} holes only.
We think that any complicated construction can be ``convexified''
in a similar fashion.
This would suggest that computing the diameter in polygonal domains with convex holes only
might be as difficult as the general case.


\subsection{A proof for \figurename~\ref{fig:examples1}(c): Case (I-I) with 6 shortest paths}

\begin{figure}[h]
\begin{center}
  \includegraphics[width=.85\textwidth]{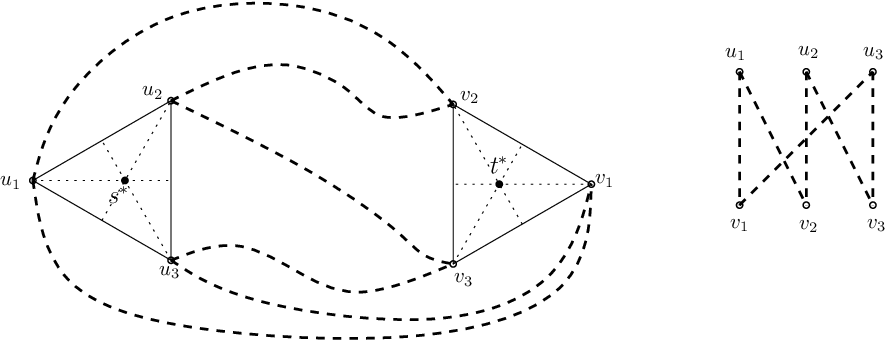}
\end{center}
\caption{\small
 A schematic diagram corresponding to the polygonal domain shown in \figurename~\ref{fig:examples1}(c).
 }
\label{fig:3-3-6}
\end{figure}

\begin{claim}
 In the polygonal domain described in \figurename~\ref{fig:examples1}(c),
 $(s^*, t^*)$ is the unique diametral pair.
\end{claim}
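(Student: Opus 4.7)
The plan is to establish that $(s^*,t^*)$ is the unique diametral pair through three steps: a structural reduction to the interior case, verification of the diameter value, and a combinatorial-geometric argument using Theorem~\ref{theorem:charact}.

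\emph{Structural reduction and diameter value.} By the jigsaw construction, $\Poly$ decomposes into two open equilateral triangles $T_s$ (with corners $u_1,u_2,u_3$) and $T_t$ (with corners $v_1,v_2,v_3$) of common circumradius $r$, connected only through narrow corridors. For $s\in T_s$ and $t\in T_t$, every feasible path must enter some corridor at $u_i$ and exit at $v_j$, so
\[
 \dist(s,t) = \min_{i,j}\bigl(\|s-u_i\| + \dist(u_i,v_j) + \|t-v_j\|\bigr).
\]
By construction, the six canonical pairs $P := \{(1,1),(1,2),(2,2),(2,3),(3,3),(3,1)\}$ satisfy $\dist(u_i,v_j) = D$. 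Since $\|s^*-u_i\|=\|t^*-v_j\|=r$ for all $i,j$, each of the six $P$-indexed path-length functions takes value $L:=D+2r$ at $(s^*,t^*)$, giving six shortest paths of length $L$. Any point on $\bd\Poly$ or inside a corridor lies within distance less than $r$ of a gateway corner, so by the triangle inequality its geodesic distance to any other point is strictly less than $L$; thus any other diametral pair must also lie in $\intr T_s\times\intr T_t$.

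\emph{Uniqueness in the interior.} Suppose $(s,t)\in\intr T_s\times\intr T_t$ is diametral and distinct from $(s^*,t^*)$. Theorem~\ref{theorem:charact}, Case~\textbf{(II)}, forces at least five shortest paths between $s$ and $t$, each of length $L$ and each through a pair $(u_i,v_j)$. A path through $(i,j)\in P$ gives the equality $\|s-u_i\|+\|t-v_j\|=2r$, whereas a path through $(i,j)\notin P$ gives $\|s-u_i\|+\|t-v_j\|=2r-(\dist(u_i,v_j)-D)<2r$. Since only three pairs lie outside $P$, at least two of the five equalities come from $P$. I would then enumerate the distribution patterns of the at-least-five shortest paths across the $6$-cycle structure of $P$ together with the three non-$P$ pairs; in each pattern, propagating the $P$-equalities around the cycle yields $\|s-u_1\|=\|s-u_2\|=\|s-u_3\|$ and $\|t-v_1\|=\|t-v_2\|=\|t-v_3\|$, forcing $s=s^*$ and $t=t^*$ by uniqueness of the circumcenter of an equilateral triangle.

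The main obstacle is the subcases in which fewer than five of the shortest paths go through $P$, so non-$P$ equalities must be combined with $P$-inequalities to derive a contradiction or collapse. In these subcases the constraint system becomes overdetermined (more equations than degrees of freedom in $(s,t)\in T_s\times T_t$), and the $3$-fold rotational symmetry of the construction reduces the number of essentially distinct patterns to a small finite list. In each remaining subcase, either combining two active equalities directly contradicts a required $P$-inequality, or else the linear system collapses to the fully symmetric configuration $(s^*,t^*)$; the specific corridor lengths of the jigsaw ensure that no accidental low-dimensional solution space arises.
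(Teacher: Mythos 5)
Your approach is genuinely different from the paper's. The paper first establishes the diameter value $\dist(s^*,t^*)=2+L$ (with unit circumradius), excludes corridor points by the direct bound $\dist(s,t)\leq \sqrt{3}+L$, then uses the rotational symmetry of the $P$-configuration to argue that the farthest point from any $s\in\triangle u_1u_2u_3$ lies on an angle bisector of $\triangle v_1v_2v_3$, then narrows further to $s\in\seg{u_is^*}$, and only at the very last step counts shortest paths (exactly four unless $s=s^*$) and invokes Theorem~\ref{theorem:charact}. You instead open with Theorem~\ref{theorem:charact} to get $\geq 5$ shortest paths immediately and try to extract the equilateral constraints directly from the resulting system. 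That is a legitimately cleaner plan, but as written it has two genuine gaps.

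First, the structural reduction is wrong as stated. A point deep in a corridor of length $D$ (which is meant to be arbitrarily large) is nowhere near distance $r$ from a gateway corner, so the claim ``Any point on $\bd\Poly$ or inside a corridor lies within distance less than $r$ of a gateway corner'' is false, and the triangle-inequality inference you draw from it would not follow even if it were true. The correct bound, which the paper uses, exploits the existence of a cross-path through one of the triangles: for $s$ on any corridor, $\dist(s,t)\leq \sqrt{3}+L<2+L$ for every $t\in\Poly$ (the factor $\sqrt{3}$ being the side length of a unit-circumradius equilateral triangle, strictly less than $2$).

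Second, the enumeration is left incomplete at exactly the point where the proof needs closing, and the worry you flag---subcases with fewer than five shortest paths through $P$---is in fact spurious, but you would need to prove that. The missing observation is that for $s\in\intr T_s$ and $t\in\intr T_t$, \emph{every} shortest path has its first/last corners forming a $P$-pair: if $(u_i,v_j)\notin P$, the shortest $u_i$--$v_j$ path necessarily passes through some other gateway $u_{i'}$ with $(i',j)\in P$ (or symmetrically some $v_{j'}$), and then $s\to u_i\to u_{i'}$ is strictly longer than $s\to u_{i'}$ by the triangle inequality, since $u_i$ cannot lie on $\seg{s\,u_{i'}}$ when $s\in\intr T_s$. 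Once this is established, all $\geq 5$ active pairs lie in $P$, and your cycle-propagation argument over the $6$-cycle $u_1$--$v_1$--$u_3$--$v_3$--$u_2$--$v_2$--$u_1$ does work: any $5$ of the $6$ $P$-pairs already force $\|s-u_1\|=\|s-u_2\|=\|s-u_3\|$ and $\|t-v_1\|=\|t-v_2\|=\|t-v_3\|$, hence $(s,t)=(s^*,t^*)$. Supplying that observation would turn your sketch into a complete, and arguably cleaner, alternative to the paper's argument.
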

\begin{proof}[Proof of Claim.]
Recall that by construction of the problem instance, the triangles $\triangle u_1u_2u_3$ and $\triangle v_1v_2v_3$ are regular
and $\dist(u_1,v_1)=\dist(u_1, v_2)= \dist(u_2, v_2)=\dist(u_2,v_3)$
$=\dist(u_3,v_3)=\dist(u_3,v_1) = L$, for some arbitrarily large value $L > 0$.
Also, $s^*$ and $t^*$ are the centers of $\triangle u_1u_2u_3$ and $\triangle v_1v_2v_3$,
respectively.

We assume that both triangles $\triangle u_1u_2u_3$ and $\triangle v_1v_2v_3$
are inscribed in a unit circle (and thus $\dist(s^*, t^*) = 2 + L$).
For any point $s$ on any shortest path between $u_i$ and $v_j$,
it is easy to see that $\dist(s, t) \leq \sqrt{3} + L < \dist(s^*,t^*)$
for every point $t \in \Poly$.
In particular, no point on those paths cannot contribute to the diameter.

(1) First, observe that $\max_{t\in \triangle v_1v_2v_3} \dist(s^*,
t) = \max_{s\in \triangle u_1u_2u_3} \dist(s, t^*) = \dist(s^*, t^*)$.

(2) For any $s\in \triangle u_1u_2u_3$, its farthest point $t\in \triangle v_1v_2v_3$ is
on the angle bisector of some $v_i$.
Consider any $s\in \triangle u_1u_2u_3$.
Without loss of generality we assume that $\|s-u_1\| \leq \min_i\{\|s -u_i\|\}$.
Both shortest paths to $v_1$ and to $v_2$ from $s$
pass through $u_1$.
We have $\dist(s, v_1) = \dist(s, v_2)$ by construction
and its farthest point $t\in \triangle v_1v_2v_3$ must be in the angle
bisector of $v_3$.
By symmetry, the same property holds when the closest corner from $s^*$
is either $u_2$ or $u_3$.

Conversely, for any $t$, its farthest point $s\in
\triangle u_1u_2u_3$ must be on a bisector of some $u_i$.
In any diametral pair $(s,t)$, we have that $t$ is the farthest point of
$s$ (and vice versa), so both must be on one of the angle bisectors.

(3) If $(s,t)$ is a diametral pair, then $s \in \seg{u_is^*}$ and
$t\in \seg{v_jt^*}$, for some $i$ and $j$.
Suppose that $s$ lies on the bisector of $u_1$ but not in between $u_1$ and $s^*$.
We then have $\|s-u_2\| = \|s-u_3\| < \|s-u_1\|$ and
$\dist(s,v_1)= \dist(s,v_2)=\dist(s,v_3) = \|s-u_2\|+L$ by construction.
This implies that $t^*$ is the farthest point of such $s$.
Since $\|s-u_2\| < 1$ and $\dist(s,t^*)<2+L$, $(s, t^*)$ is not a diametral pair.

(4) Now, pick any point $s\in \seg{u_1s^*}$ with $s\neq s^*$.
Suppose that $t \in \triangle v_1v_2v_3$ is the farthest point from $s$.
We know that $t \in \seg{v_3t^*}$ by above discussions.
In this case, we have four shortest paths between $s$ and $t$
through $(u_1,v_1)$, $(u_1,v_2)$, $(u_2,v_3)$, and $(u_3,v_3)$;
the other two are strictly longer unless $s = s^*$.
By Theorem~\ref{theorem:charact}, such $s \in \seg{u_1s^*}$ with $s\neq s^*$
and its farthest point $t$ cannot form a maximal pair.
By symmetry, the other cases where $s\in\seg{u_is^*}$ can be handled.

Hence, $(s^*,t^*)$ is a unique diametral pair and the geodesic diameter is $2+L$.
\end{proof}

\subsection{Diametral pair of Case (I-I) with exactly 5 shortest paths}
Here, we present a polygonal domain
in which the diameter is determined by two interior points
and exactly five shortest paths between them.
This proves the tightness of Case \textbf{(I-I)} in Theorem~\ref{theorem:charact}.

\begin{figure}[h]
\begin{center}
  \includegraphics[width=.61\textwidth]{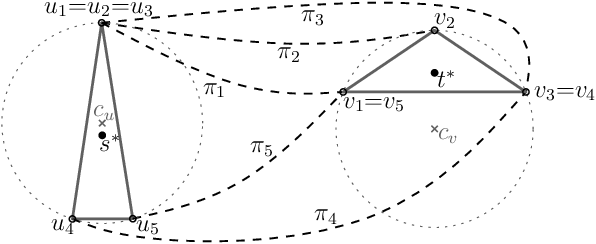}
\end{center}
\caption{\small
 A schematic diagram of a polygonal domain in which $|V_{s^*}|=|V_{t^*}|=3$ and
 $|\Pi(s^*,t^*)| = 5$.
 }
\label{fig:3-3-5}
\end{figure}


\figurename~\ref{fig:3-3-5} shows a schematic description of a polygonal domain $\Poly$.
We assume that only the position of the vertices $u_i$ and the $v_i$ are geometrically precise. We construct the problem instance such that we have $u_1=u_2=u_3$, $v_1=v_5$, and $v_3=v_4$, and
the convex hulls of the $u_i$ and of the $v_i$ form isosceles triangles $\triangle_u$ and $\triangle_v$.
Each of $\triangle_u$ and $\triangle_v$ is inscribed in a unit circle centered at $c_u$ and $c_v$. Moreover, the bases of both triangles are horizontal and the angles opposite to the bases are
$18^\circ$ and $112^\circ$, respectively.
Note that the side lengths of the triangles $\triangle_u$ and $\triangle_v$ are as follows:
$\|u_1 - u_4\| = 1.97537\cdots$ and $\|u_4 -u_5\|= 0.61803\cdots$;
$\|v_2 - v_1\| = 1.11833\cdots$ and $\|v_1 - v_3\| = 1.85436\cdots$.

In this configuration, we set the constants as follows:
letting $L:=\dist(u_1,v_1) = \dist(u_3,v_3)$ be some sufficiently large number,
we set $\dist(u_2,v_2) = L + 0.5$ and $\dist(u_4,v_4) = \dist(u_5,v_5) = L + 0.2$.
Note that this configuration can be realized with four obstacles in a similar way
as \figurename~\ref{fig:examples1}(c).

Since we have fixed all necessary parameters,
we have a fully explicit description of the $\plf_{u_i,v_i}$.
Due to the difficulty of finding an exact analytical solution,
we used numerical methods to solve the system of equations
$\plf_{u_1,v_1}(s,t) = \cdots = \plf_{u_5, v_5}(s,t)$.
We have found that there is a unique solution $(s^*, t^*)$
such that $s^*\in \triangle_u$ and $t^*\in\triangle_v$; we obtained
$s^* = c_u + (0, -0.102795\cdots)$, $t^* = c_v + (0, 0.555361\cdots)$ and
$\dist(s^*,t^*) = 2.047433734\cdots +L$.
(See \figurename~\ref{fig:3-3-5}.)

We first checked that $(s^*, t^*)$ is a maximal pair based on the following lemma,
which can be shown using elementary linear algebra together with the convexity of the path-length functions.
\begin{lemma}
 Suppose that $(s,t)$ is a solution to the system $\plf_{u_1,v_1}(s,t) = \cdots = \plf_{u_5, v_5}(s,t)$.
 If any four of the five gradients $\nabla \plf_{u_i,v_i}$ at $(s,t)$
 are linearly independent (as vectors in a $4$-dimensional space) and
 one of them is represented as a linear combination of
 the other four with all ``negative'' coefficients,
 then $(s,t)$ is a local maximum of the pointwise minimum of the five functions $\plf_{u_i,v_i}$.
 \hfill\copy\ProofSym
\end{lemma}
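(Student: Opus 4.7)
The plan is to recast the lemma as a linear-algebra condition on the five gradients $a_i := \nabla \plf_{u_i,v_i}(s,t) \in \Real^4$ at $x^* := (s,t)$, and then lift a direction-wise statement to a uniform local statement via compactness and the smoothness of the $\plf_{u_i,v_i}$. Smoothness is available because each $\plf_{u_i,v_i}$ is a sum of two Euclidean norms plus the constant $\dist(u_i,v_i)$, and $s, t \in \intr\Poly$ while $u_i, v_i \in V \subset \bd\Poly$, so none of the norms is being differentiated at its singular point.

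The key algebraic step will be the following: after renumbering so that $a_1,\ldots,a_4$ are the linearly independent four, the hypothesis gives $a_5 = \sum_{i=1}^4 \alpha_i a_i$ with every $\alpha_i < 0$, which rewrites as the vanishing strictly positive combination
\[ a_5 + \sum_{i=1}^{4} (-\alpha_i)\, a_i = 0. \]
From this I want to deduce that every nonzero direction $y \in \Real^4$ admits some index $i$ with $a_i \cdot y < 0$: pair the identity with $y$ to obtain $\sum_i \gamma_i (a_i \cdot y) = 0$ with all $\gamma_i > 0$; then nonnegativity of every $a_i \cdot y$ would force them all to vanish, and in particular $a_j \cdot y = 0$ for $j = 1,\ldots,4$ would give $y = 0$ by linear independence. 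This is what I expect to be the main obstacle, since it is the substantive geometric content of the lemma: the stated algebraic hypothesis is exactly the condition that the origin lies in the interior of the convex cone spanned by $\{a_1,\ldots,a_5\}$, which is precisely what excludes any globally ascending direction at $x^*$.

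Once that is in place, the remainder is routine. I would define $\phi\colon S^3 \to \Real$ on the unit sphere in $\Real^4$ by $\phi(y) := \min_i (a_i \cdot y)$; by the previous step $\phi < 0$ on $S^3$, and continuity together with compactness gives $\phi^* := \max_{y \in S^3} \phi(y) < 0$. A Taylor expansion of each smooth $\plf_{u_i,v_i}$ around $x^*$ yields a uniform constant $C > 0$ and radius $r_1 > 0$ such that
\[ \plf_{u_i,v_i}(x^* + z) \leq \plf_{u_i,v_i}(x^*) + a_i \cdot z + C\|z\|^2 \]
for every $i$ and every $\|z\| \leq r_1$. For any $z$ with $0 < \|z\| \leq \min\{r_1,\, |\phi^*|/(2C)\}$, picking the index $i$ at which $\phi(z/\|z\|)$ is attained gives $a_i \cdot z \leq \phi^*\|z\|$, so the Taylor estimate forces $\plf_{u_i,v_i}(x^* + z) < \plf_{u_i,v_i}(x^*)$; taking the minimum over all five functions then yields $g(x^* + z) < g(x^*)$ for $g := \min_i \plf_{u_i,v_i}$, so $(s,t)$ is (strictly) a local maximum of $g$, as required.
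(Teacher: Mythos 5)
Your proof is correct. The paper does not actually supply a proof of this lemma---it only remarks that the statement ``can be shown using elementary linear algebra together with the convexity of the path-length functions''---so there is no paper argument to compare against. Your argument fills the gap cleanly: the linear-algebra core (rewriting the negative-coefficient hypothesis as a vanishing strictly positive combination $\gamma_5 a_5 + \sum_{i\le 4}\gamma_i a_i = 0$, then using linear independence of $a_1,\ldots,a_4$ to force $\min_i a_i\cdot y<0$ for every nonzero $y\in\Real^4$) is exactly what the hypothesis encodes, namely that the origin lies in the interior of the cone generated by the five gradients. Lifting this to a uniform strict local decrease of $g=\min_i \plf_{u_i,v_i}$ via compactness of $S^3$ and a quadratic Taylor upper bound is the right finishing move. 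One small point worth noting: the paper's hint about convexity is not quite the operative property for the last step, since convexity only yields lower bounds $f(x^*+ty)\ge f(x^*)+t\,a_i\cdot y$ along rays, whereas you need an upper bound to conclude that $f$ decreases; what you correctly invoke instead is the smoothness of each $\plf_{u_i,v_i}$ near $(s,t)$ (available because $s,t\in\intr\Poly$ keep both Euclidean norms away from their singular points at $u_i$ and $v_i$), which gives the uniform second-order bound needed to make the decrease hold on a whole neighborhood rather than just direction by direction.
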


Next, to see that $(s^*,t^*)$ is a diametral pair,
we have run our algorithm for each of Cases \textbf{(B-B)}, \textbf{(B-I)}, and \textbf{(I-I)};
as a result, there are $44$ candidate pairs, including $(s^*, t^*)$, falling into those cases
among which at most $11$ are maximal and only $(s^*, t^*)$ is diametral.
Note that the pair $(s^*, t^*)$ is the only candidate pair of Case \textbf{(I-I)}.
Also, observe that any point on the shortest path between $u_i$ and $v_i$
cannot belong to a diametral pair.
This implies that none of the $u_i$ and the $v_i$ belongs to a diametral pair. In particular, we have that none of the Cases \textbf{(V-*)} can happen.
In addition, we also sampled about 350,000 points uniformly from each of $\triangle_u$ and $\triangle_v$,
and evaluated the geodesic distances of the 350,000$^2$ pairs.

Note that one can modify the construction to have $|V_{s^*}|=|V_{t^*}|=|\Pi(s^*,t^*)|=5$.
For the purpose, we can split $u_1, u_2, u_3$ into three close corners
(analogously for corners, $v_1, v_5$ and $v_3, v_4$).
The splitting process should preserve the differences between the
distances $\dist(u_i,v_i)$ for all $i=1,\ldots, 5$ (and increase other distances).
We have tested such an example in the same way as above
and concluded that a solution equalizing the five path-length functions
is indeed a diametral pair.

\end{document}